\newtheorem{example}{Example}
\newtheorem{theorem}{Theorem}
\newtheorem{corollary}{Corollary}
\theoremstyle{definition}
\newcommand{\cmt}[1]{}
\title{On Truthful Item-Acquiring Mechanisms for Reward Maximization}
\author{
  Liang Shan\\
  Renmin University of China\\
 \texttt{shanliang@ruc.edu.cn}
 \And
 Shuo Zhang\\
  Renmin University of China\\
 \texttt{zhangshuo1422@ruc.edu.cn}
 \And
 Jie Zhang\\
  University of Bath\\
 \texttt{jz2558@bath.ac.uk}
 \And
  Zihe Wang \thanks{Zihe Wang is the corresponding author}\\
  Renmin University of China\\
 \texttt{wang.zihe@ruc.edu.cn}
}
\begin{document}
\maketitle

\begin{abstract}
In this research, we study the problem that a collector acquires items from the owner based on the item qualities the owner declares and an independent appraiser's assessments. The owner is interested in maximizing the probability that the collector acquires the items and is the only one who knows the items' factual quality. The appraiser performs her duties with impartiality, but her assessment may be subject to random noises, so it may not accurately reflect the factual quality of the items. The main challenge lies in devising mechanisms that prompt the owner to reveal accurate information, thereby optimizing the collector's expected reward. We consider the menu size of mechanisms as a measure of their practicability and study its impact on the attainable expected reward. For the single-item setting, we design optimal mechanisms with a monotone increasing menu size. Although the reward gap between the simplest and optimal mechanisms is bounded, we show that simple mechanisms with a small menu size cannot ensure any positive fraction of the optimal reward of mechanisms with a larger menu size. For the multi-item setting, we show that an ordinal mechanism that only takes the owner's ordering of the items as input is not incentive-compatible. We then propose a set of Union mechanisms that combine single-item mechanisms. Moreover, we run experiments to examine these mechanisms' robustness against the independent appraiser's assessment accuracy and the items' acquiring rate. 
\end{abstract}

\section{Introduction}
Information asymmetry is a prevalent concern in online markets and the economics of the web. This situation arises when one party possesses superior information compared to another in a market setting, potentially leading to inefficiencies that hinder the quantity and quality of transactions. To address these issues and enhance market efficiency, it becomes imperative to design incentive mechanisms that encourage truthful information sharing among the involved parties.

In one such scenario, an owner possesses a collection of items, and their actual quality is undisclosed. A collector considers acquiring these items but lacks precise information about their condition and quality. Despite this, the collector has access to statistical data related to the items, offering valuable yet potentially imperfect insights. Furthermore, the collector can rely on assessments from an independent appraiser, an expert in the field. It's important to note that these appraiser assessments may be affected by random noise, adding complexity to the decision-making process.
The central question addressed in this study revolves around whether the collector can craft incentive mechanisms that effectively elicit truthful information from the owner. Our ultimate aim is to optimize the collector's reward for acquiring these items, shedding light on how to bridge information gaps and foster more efficient dynamics within online markets. 

This item-acquiring problem encompasses a wide range of optimization problems. We provide several illustrative examples to showcase its versatility.  

\emph{Paper acceptance for conference proceedings.} 
The internet and web technologies have rapidly evolved, leading to a surge in academic paper submissions, especially in areas like Web economics, social networks, and user modeling. For instance, The Web Conference (WWW) saw a substantial increase in paper submissions, from around 950 in 2017 to over 1,900 in 2023, reflecting heightened interest in this dynamic field.
To manage this influx, conferences employ strategies like engaging expert reviewers \cite{shah2022challenges}. They also implement measures to streamline the process \cite{stelmakh2021novice, fiez2020super, zhao2018novel, ford2013defining, sun2022does}. However, information asymmetry poses challenges in conducting accurate reviews, partly due to differences in authors' and reviewers' timeframes \cite{sun2022does, stelmakh2021novice}.
In this process, authors aim to maximize their paper's acceptance chances, while conference organizers seek to uphold their prestige by accepting top-tier research papers.

\emph{App Store Review Process.} 
The App Store Review Process exemplifies a web-related economic activity where developers submit their mobile applications to platforms like Apple's App Store or Google Play. In this ecosystem, expert moderators meticulously assess the apps for quality, security, and guidelines adherence, maintaining a high-quality, secure marketplace. App stores aim to ensure compliance, generate revenue, promote fairness, and build user trust. On the other hand, developers possess a significant information advantage over app stores and their moderators. They have in-depth knowledge about their app's functionality, marketing strategies, and monetization data, and have their own set of goals, including seeking exposure, monetization opportunities, user engagement, compliance with platform rules, receiving feedback, and achieving long-term success.

This paper presents a fresh perspective on eliciting true information in the item-acquiring problem and addresses it using mechanism design approaches. 
In mechanism design, a menu refers to a collection of options or alternatives offered to participants by a mechanism. Generally, a mechanism with a larger menu size has the potential to achieve higher efficiency by providing more choices to participants. However, this increased efficiency comes at the cost of greater complexity. As a result, simpler mechanisms with a smaller menu size are often preferred over complex ones with larger menus, as they are easier to implement and understand \cite{hart2013menu, wang2014optimal, DBLP:journals/geb/BabaioffGN22}. Mechanism design studies have extensively explored the tradeoff between menu size and optimality. The optimal size of a menu in a mechanism is often context-specific and contingent on the objectives of the mechanism designer. Taking into account these factors, researchers aim to strike a balance between offering ample choices and ensuring manageable implementation for more effective and efficient mechanisms.

\subsection{Our Contribution}
In the single-item setting, our research highlights the uniqueness of the Score-Only Mechanism as the sole deterministic, incentive-compatible, and monotone mechanism under mild conditions. Despite its simplicity, we discover that the additive reward gap between the Score-Only Mechanism and the optimal reward remains bounded. To enrich the choices available to the owner, we introduce a set of mechanisms offering two options. Notably, the optimal Two Menu mechanism can be efficiently computed in polynomial time. Additionally, we compute the optimal mechanism with a bounded menu size, shedding light on the tradeoff between menu size and reward. Surprisingly, we find that the collector's expected reward can increase without bounds as the menu size grows, and having a small menu size cannot guarantee any positive fraction of the optimal reward.

In the multi-item setting, we unveil the limitations of an ordinal mechanism that relies solely on the owner's item ordering as input, as it proves to be non-incentive-compatible. To address this, we propose a set of Union mechanisms that ingeniously combine single-item mechanisms. 

Furthermore, we conduct experiments to assess the robustness of these mechanisms concerning the accuracy of the independent appraiser's assessments and the rate of items being acquired. These experiments shed valuable insights into the practicality and effectiveness of the proposed mechanisms in real-world scenarios. 

\subsection{Related work}
Handling the fast-growing number of submissions to academic conferences has attracted a lot of attention. Almost all previous peer review mechanisms only rely on the reviewers' expertise \cite{goldsmith2007ai,DBLP:conf/aaai/DickersonSSX19,meir2021market}. To the best of our knowledge, the first exception was \cite{su2021you, su2022truthful}, which recommends using the author's knowledge to assist in peer-reviewing because an author knows her papers the best. Su proposes the Isotonic Mechanism, which asks the author to report the ranking of her submissions. Wu \cite{wu2023isotonic} subsequently extends the Isotonic Mechanism from the single-owner scenario to accommodate multiple owners, a scenario common in peer-review situations involving papers with multiple authors. The mechanism returns adjusted scores by solving a convex optimization problem related to the ranking and imprecise raw scores given by reviewers. If the author's goal is to improve the accuracy of the scores for each paper, the author should provide accurate ranking information. While we will also consider asking the author (the owner of research papers) to provide a ranking of her papers, we consider the setting in that the owner wants to maximize the probability of the items getting acquired by a collector. 

Miller at 2005 introduce the peer-prediction problem and proposes a peer-prediction method to elicit informative feedback in online review platforms that can improve the accuracy of ratings and reduce the problem of shilling \cite{miller2005eliciting}. Faltings at 2017 notice that in many data science applications, such as crowdsourcing, peer review, or online auctions, there is a problem of information asymmetry, where individuals have different levels of information about a particular item or decision. This can lead to biased or inaccurate information and undermine the data's usefulness \cite{faltings2017game}. They employ auction design to elicit truthful information from individuals. Dasgupta at 2013 propose a new approach to crowdsourced judgment elicitation that considers the crowd workers' varying levels of expertise \cite{dasgupta2013crowdsourced}. The peer-prediction problem has also been studied in \cite{kong2019information, frongillo2017geometric, shnayder2016informed, kong2020dominantly, kong2018water, liu2023surrogate}. The major difference between our model and theirs is that we assume that the independent appraiser is non-strategic. The appraiser performs her duties with impartiality, which genuinely provides a score for the collector's item-acquiring decision, although the score may be subject to random noises. 

Brier at 1950 introduces the Brier score as a way to evaluate probabilistic forecasts \cite{brier1950verification}. The Brier score is a proper scoring rule that measures the accuracy of a probabilistic forecast by comparing the predicted probabilities to the actual outcomes. Examples of the scoring rule include logarithmic, quadratic, and spherical scoring rules \cite{good1992rational, good1971comment, prelec2004bayesian, gneiting2007strictly}. They are initially used in weather forecasting \cite{palmer2002economic, brocker2012evaluating}, and have now been extended to prediction markets \cite{DBLP:conf/sigecom/GaoZC13}, finance, and macroeconomics \cite{garratt2003forecast, nolde2017elicitability}. In our model, the goal is to optimize the collector's reward instead of just eliciting the true information. 

The menu size is an important consideration in mechanism design as it can impact the optimality and practicability of the mechanism. A larger menu can provide more options for the participants, but it can also increase the complexity of the mechanism and reduce practicality. Therefore, researchers have proposed mechanisms of different menu sizes and examined their optimality in various settings \cite{hart2013menu,wang2014optimal,DBLP:journals/geb/BabaioffGN22}.

\section{Single-item Mechanisms}
An owner has an item whose factual quality, $v\in V$, is the owner's private information. The factual quality $v$ follows a discrete probability distribution $D$. Let $d(v)$ be the probability that the factual quality is $v$ and the support of $V$ be $|V|=n$. In a single-item acquiring mechanism, the owner declares a quality $v' \in V$ of her item, which is not necessarily the same as $v$. The collector will arrange for the item to be assessed by an independent appraiser and deemed a quality score. This quality score is subject to random noises and may not accurately reflect the item's factual value. Denote $r(v,s)$ the probability that the item's factual quality is $v$, and it receives score $s$, where $s\in S$ and the support of $S$ is $|S|=m$. Let $R=[r(v,s)]$ be the stochastic matrix whose elements are $r(v,s)$ and $\sum_{s \in S} r(v,s)=1, \forall\ v \in V$. The matrix $R$ and the distribution $D$ are public information. Denote $t$ the quality bar of this \emph{item-acquiring problem}, which is chosen by nature. The quality bar influences the collector's decision-making since the collector has a positive reward if she acquires an item whose factual value $v$ exceeds $t$. Given the distribution $D$, the stochastic matrix $R$, and the quality bar $t$, the collector publicizes an acquiring matrix $X=[x(v',s)]$, in which the element $x(v',s)$ is the probability she acquires the item when the owner reports quality $v'$ and receives score $s$. The owner, in view of the acquiring matrix $X$ and the quality bar $t$, is interested in maximizing the probability that her item gets acquired by the collector. Therefore, she will report a quality $v'$ that maximizes 
\begin{eqnarray} 
    \sum_{s \in S} x\left(v', s\right)  r\left(v, s\right).\notag
\end{eqnarray}
When $x(v',s) \in \{0,1\}$, we call the single-item acquiring mechanism \emph{deterministic}. When $x(v',s) \in [0,1]$, the mechanism is \emph{randomized}. We are interested in \emph{incentive-compatible} mechanisms that elicit truthful information from the owner. Incentive-compatibility is a crucial property in mechanism design because it ensures that individuals have no incentive to misrepresent their private information. Formally speaking, a mechanism is incentive-compatible if the owner maximizes the probability that her item gets acquired by declaring the true quality $v$, i.e., 
$$\sum_{s \in S}  x(v,s) r(v,s)  \geq \sum_{s\in S}  x(v',s) r(v,s)  , \ \ \forall \,v,v'\in V.$$ 
When the owner reports truthfully, the collector's reward is 
\begin{equation}
\sum_{s \in S, v \in V} \left(v-t\right)  d\left(v\right) x\left(v, s\right)  r\left(v, s\right) .\notag
\end{equation}
So, the collector's objective is to determine an acquiring matrix $X$ that elicits true information from the owner to maximize her reward. We call the mechanism that maximizes the collector's expected reward \emph{optimal}. 

As a principle, the collector relies on the expertise of the appraiser, so it is desirable that the acquiring matrix $X=[x(v,s)]$ is \emph{monotone} increasing in the quality score $s$ for any fixed factual quality $v$. In other words, a mechanism is monotone if the higher quality of the item is considered by the independent appraiser, the more likely that the collector will acquire the item.  

We begin by examining a simple mechanism that makes the acquiring decision regardless of the owner's report. 

{\bf The Score-Only Mechanism (SOM).}
For any score $s$, the collector acquires the item if the expected quality of the item conditioning on the score reaches the quality bar, and denies it otherwise, i.e.,
\begin{align*}
x(v,s)=
  \left\{
\begin{array}{ll}
      1, & E_{v\in V} [v\,|\,s] \geq t, \,\,\,\,\,  \forall \,s \in S, \\
      0, & \text{otherwise} .
\end{array} 
\right.  
\end{align*}
Since  
\begin{align*}
    E_{v\in V}[v\,|\,s] = \sum_{v\in V} v \cdot \Pr[v\,|\,s] = \frac{\sum_{v\in V} v  d(v) r(v,s)}{\sum_{v\in V} d(v) r(v,s)},
\end{align*}
the acquiring matrix of the Score-Only Mechanism is such that $x(v,s)=1$ if and only if 
\begin{align*}
    \sum_{v \in V}  (v - t)  d(v)   r(v,s)  \geq 0,  \,\,\, \forall \,s \in S.
\end{align*}
In view that SOM makes the acquiring decision independent of the owner's report $v'$, the mechanism is incentive-compatible. 
In addition, the extent to which SOM maximizes the collector's reward depends on the stochastic matrix $R$.
We call matrix $R$ \emph{consistent} with probability distribution $D$, if for any $s$, $$E_{v\in V}\,[v\,|\,s] \ge t \Leftrightarrow s \ge t.$$ Hence, with the consistency condition, the collector in SOM can acquire the item if the quality score deemed by the appraiser exceeds the quality bar, i.e., $s \ge t$. In practice, this criterion significantly facilitates the collector's decision-making. In the following, we show that SOM is the optimal mechanism that fulfills these desired properties.

\begin{theorem}\label{SOM-opt}
When the stochastic matrix $R$ is consistent with the quality distribution $D$, SOM is optimal amongst all deterministic, incentive-compatible, and monotone mechanisms.
\end{theorem}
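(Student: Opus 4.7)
The plan is to use determinism and monotonicity to give each row of the acquiring matrix a threshold structure, then use incentive compatibility to collapse these row-specific thresholds into a single threshold independent of the reported quality, and finally use the consistency condition to identify the optimal threshold as $t$, which recovers SOM.

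First, for each fixed $v$, the row $x(v,\cdot)$ takes values in $\{0,1\}$ and is non-decreasing in $s$, so there is a threshold $\tau(v) \in S \cup \{+\infty\}$ with $x(v,s)=1$ if and only if $s \geq \tau(v)$. The owner's acquisition probability when her true quality is $v$ and she reports $v'$ becomes the conditional tail $\sum_{s \geq \tau(v')} r(v,s) = \Pr[s \geq \tau(v') \mid v]$, which is non-increasing in $\tau(v')$.

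Next I would invoke IC, which demands that truth-telling maximize this tail. If $\tau(v_1) < \tau(v_2)$ for some $v_1, v_2$, then a type-$v_2$ owner could profitably deviate to reporting $v_1$ unless $r(v_2,s)=0$ for every $s \in [\tau(v_1), \tau(v_2))$; such degeneracies contribute nothing to either the owner's acquisition probability or the collector's reward, so modulo this irrelevance we may take all $\tau(v)$ equal to a common value $\tau^*$ and the mechanism reduces to the score-only form $x(v,s) = \mathbf{1}[s \geq \tau^*]$. At that point the collector's reward becomes
\begin{equation*}
\sum_{s \geq \tau^*} g(s), \qquad g(s) := \sum_{v \in V} (v-t)\, d(v)\, r(v,s),
\end{equation*}
and $g(s)$ has the same sign as $E_{v \in V}[v \mid s] - t$ because the conditional expectation formula derived earlier differs from $g(s)$ only by the positive factor $\sum_v d(v) r(v,s)$. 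Consistency now delivers $g(s) \geq 0 \Leftrightarrow s \geq t$, so the one-dimensional optimization over $\tau^*$ is maximized at $\tau^* = t$, which is exactly SOM.

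The main obstacle is the IC-to-single-threshold reduction in the second step: one has to argue carefully that any mismatch between $\tau(v_1)$ and $\tau(v_2)$ either violates IC outright or corresponds to score realizations that occur with probability zero under $R$ and hence cannot increase the collector's reward. Once this ``without loss of generality'' collapse is pinned down, everything else is a direct consequence of consistency and a sign analysis of $g(s)$.
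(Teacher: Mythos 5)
Your proof follows the same route as the paper's: determinism and monotonicity give each row of the acquiring matrix a threshold structure, incentive compatibility collapses the row-specific thresholds into a single common threshold, and consistency pins that threshold at $t$, recovering SOM. Your explicit treatment of the degenerate case where $r(v,s)=0$ on the gap between two thresholds is a small refinement --- the paper simply asserts the deviation is profitable and concludes all rows coincide --- but it does not alter the substance of the argument.
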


\begin{proof}
Due to the consistency condition and that SOM is deterministic, the collector acquires the item if and only if $s\ge t$. Therefore, SOM is monotone in the quality score $s$. Next, we prove its optimality. Since the elements of the acquiring matrix, $X$, are binary in deterministic mechanisms, when we further restrict to monotone mechanisms, we know that for each $v_i \in V$, there exists an integer $k_i$, such that $x(v_i,s_1) = \dots = x(v_i,s_{k_i}) = 0$, and $x(v_i,s_{k_i+1}) = \dots = x(v_i,s_{m}) = 1$. We claim that due to incentive compatibility, this integer $k_i$ should be independent of the row $i$. Suppose for contradiction that there exist two rows $i$ and $j$, which correspond to two different factual qualities $v_i$ and $v_j$, for which $k_i \neq k_j$. Without loss of generality, assume that $k_i<k_j$. Then, when the item's factual quality is $v_j$, the owner can increase the probability that the item gets acquired by misreporting $v_i$, since $x(v_i,s_{k_i+1})=1$ whereas $x(v_j,s_{k_i+1})=0$, which violates incentive compatibility. Hence, the number of different rows of the acquiring matrix $X$ is 1. Last, the consistency condition uniquely determines the number of 1's in the rows of the acquiring matrix, except the ties when $s=t$, which does not make a difference to the collector's expected reward, no matter whether the mechanism acquires or denies the item. Therefore, SOM is optimal.
\end{proof}


While these properties are desirable, they restrict the search space to improve the collector's expected reward. We call a mechanism \emph{omniscient} if it knows the factual quality $v$ and only acquires the item if $v\ge t$ and denies it otherwise. So, the omniscient mechanism is not confronted with these desirable properties. We first characterize the reward gap between SOM and the omniscient mechanism.



Define the sum of the differences between the factual quality and the score by \emph{total bias}. That is, the total bias is 
\begin{equation*}
\sum_{s\in S,v\in V}  \left| s-v \right| d(v) r(v,s) .
\end{equation*}
We establish the following result.
\begin{theorem}\label{Total-bias}
The difference between the collector's reward in the Omniscient Mechanism and her expected reward in the Score-Only Mechanism is bounded by the total bias.
\end{theorem}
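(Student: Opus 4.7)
The plan is to prove the bound by writing the two rewards in a common form, subtracting, and then comparing term by term. Let $A$ denote the omniscient reward and $B$ the SOM reward. Since $\sum_s r(v,s)=1$, I would first rewrite
\[
A = \sum_{v\in V}(v-t)\,d(v)\,\mathbb{1}[v\ge t] = \sum_{s\in S,\,v\in V}(v-t)\,d(v)\,r(v,s)\,\mathbb{1}[v\ge t].
\]
By the consistency hypothesis, $E_{v}[v\mid s]\ge t \Leftrightarrow s\ge t$, so SOM's acquiring rule becomes $x(v,s)=\mathbb{1}[s\ge t]$, and therefore
\[
B = \sum_{s\in S,\,v\in V}(v-t)\,d(v)\,r(v,s)\,\mathbb{1}[s\ge t].
\]
Subtracting gives a single sum whose summand vanishes whenever $v$ and $s$ lie on the same side of $t$.

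Next I would analyse the two surviving cases. When $v\ge t$ and $s<t$, the summand is $(v-t)\,d(v)\,r(v,s)\ge 0$, and one checks $|v-t|=v-t\le v-s=|s-v|$ because $s\le t$. When $v<t$ and $s\ge t$, the summand is $(t-v)\,d(v)\,r(v,s)\ge 0$, and similarly $|v-t|=t-v\le s-v=|s-v|$ since $s\ge t$. In both cases the coefficient $|v-t|$ is dominated by $|s-v|$, and both cases contribute nonnegatively.

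Combining these two observations,
\[
A-B \;\le\; \sum_{\substack{v\ge t,\,s<t}} |s-v|\,d(v)\,r(v,s) \;+\; \sum_{\substack{v<t,\,s\ge t}} |s-v|\,d(v)\,r(v,s) \;\le\; \sum_{s\in S,\,v\in V} |s-v|\,d(v)\,r(v,s),
\]
which is the total bias, completing the proof.

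I do not anticipate a genuine obstacle: the argument is essentially a two-case pointwise comparison. The only mild care is in invoking the consistency condition correctly to replace the SOM indicator $\mathbb{1}[E_v[v\mid s]\ge t]$ by $\mathbb{1}[s\ge t]$, and in noting that both surviving contributions to $A-B$ are nonnegative so no cancellation needs to be tracked when upper-bounding by the total bias.
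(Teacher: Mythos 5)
Your argument is correct only under the additional hypothesis that $R$ is consistent with $D$, but Theorem~\ref{Total-bias} (unlike Theorem~\ref{SOM-opt}) does not assume consistency, and the paper's proof works directly from SOM's defining rule, namely $x(v,s)=1$ iff $\sum_{v\in V}(v-t)d(v)r(v,s)\ge 0$. The step where you replace SOM's indicator by $\mathbf{1}[s\ge t]$ is exactly where consistency is invoked, and without it the rest of the argument breaks. Concretely, if there is a score $s\ge t$ at which $E[v\mid s]<t$, SOM denies the item there, so pairs $(v,s)$ with $v\ge t$ and $s\ge t$ survive in $A-B$ with positive summand $(v-t)\,d(v)\,r(v,s)$; for such pairs $v$ and $s$ lie on the same side of $t$, and the pointwise domination $|v-t|\le|v-s|$ fails (take $v=s>t$). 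So the two-case comparison does not cover all surviving terms of the general statement.

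The missing idea is how to handle scores at which SOM's decision disagrees with the sign of $s-t$. The paper does this by using the inequality that defines SOM's denial, $\sum_{v\in V}(v-t)d(v)r(v,s)\le 0$, to bound the conditional loss $L_s=\sum_{v\ge t}(v-t)d(v)r(v,s)$ by $-\sum_{v<t}(v-t)d(v)r(v,s)$, and then, in the case $s>t$, rearranging that bound into $\sum_{v\in V}|v-s|\,d(v)\,r(v,s)$ as in \eqref{proof3-3} (with a symmetric argument for scores where SOM acquires but should not). Your term-by-term decomposition is the right skeleton and, restricted to the consistent case, is a clean rendering of the paper's $s\le t$ branch; to prove the theorem as stated you need to add this use of SOM's defining inequality for the remaining scores rather than the consistency shortcut.
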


\begin{proof}
Given a fixed $s\in S$, the quality deemed by the appraiser, we consider the difference between the collector's reward in the Omniscient Mechanism and her reward in the Score Only Mechanism. Their difference is due to that the SOM may not acquire the item while the Omniscient Mechanism does, when the factual quality is higher than the quality bar. We define it by the \emph{loss conditioning on $s$}. That is,
\begin{equation*}
L_s = \sum_{v\ge t} (v-t) d(v) r(v,s) . 
\end{equation*}
If SOM does not acquire the item, it is because that 
\begin{align}
    \sum_{v \in V}  (v - t)  d(v)   r(v,s)  \leq 0 . \label{proof2-1} 
\end{align}
Consider mutually exclusive cases $v<t$ and $v\ge t$, and rearrange this inequality, we have that
\begin{equation*}
L_s   \le - \sum_{v<t} (v-t) d(v) r(v,s).
\end{equation*}
When $s\leq t$, we have
\begin{equation}
\begin{aligned}
    L_s  \leq \sum_{v\geq t}  (v-s) d(v) r(v,s)  
    \leq \sum_{v \in V}  \left|v-s \right| d(v) r(v,s) .
\label{proof3-2} 
\end{aligned}
\end{equation}
When $s>t$, from \eqref{proof2-1}, we obtain
\begin{equation}
\begin{aligned}
    L_s  \le & - \sum_{v<t}  (v-t) d(v) r(v,s) \\ 
    = & -\sum_{v< s}  (v-s) d(v) r(v,s)  + \sum_{t\le v<s}  (v-s) d(v) r(v,s) - \sum_{v<t}  (s-t) d(v) r(v,s)  \\
    \le & \sum_{v \in V}  |v-s| d(v) r(v,s) .
    \label{proof3-3} 
\end{aligned}
\end{equation}
Combining function \eqref{proof3-2} with \eqref{proof3-3}, we get
\begin{align*}
    L_s  \le \sum_{v \in V}  \left| s-v \right| d(v) r(v,s) . 
\end{align*}  
This is the total scoring bias conditioned on $s$. 
Summing over all $s$, we get that 
\begin{align*}
    L = \sum_{s\in S} L_s \le \sum_{s\in S,v\in V}  \left| s-v \right| d(v) r(v,s) .
\end{align*}
The proof for the mechanism acquires the item is relatively similar. Therefore, the difference between the two mechanisms is bounded by the total bias.
\end{proof}


Clearly, in order to improve the collector's expected reward, we need to relax these properties. In the following, we present a randomized yet simple mechanism. We start by proposing a set of these mechanisms and then find the optimal one within this set.

Let $\alpha \in [0,1]$ and $b_1,b_2 \in S$, where $b_1 \le b_2 $. Given any fixed factual quality $v\in V$, due to monotonicity, assume that the scores $s_j \in S, j\in [m]$ are in increasing order. Let us consider two types of acquiring vectors as below.
\begin{align*}
x_1(v,s)&=\begin{cases}
        0, &\textrm{for~} s_1<\cdots<s_l<b_1,\\
        \alpha, &\textrm{for~} b_1 \le s_{l+1}<\dots<s_m,
    \end{cases} \\
x_2(v,s)&=\begin{cases}
        0,  &\textrm{for~} s_1<\dots<s_h<b_2,\\
        1,  &\textrm{for~} b_2 \le s_{h+1}<\dots<s_m.
    \end{cases}
\end{align*}

Recall that the owner wants to maximize the probability that the item gets acquired by the collector; that is, to maximize the $\sum_{s\in S} x\left(v', s\right)  r\left(v, s\right)$. Since $R=r(v,s)$ is public information, one can solve this maximization problem on behalf of the owner when rows of the acquiring matrix are restricted to the above two types. Essentially, this problem boils down to comparing $\alpha  \sum_{s\ge b_1} r(v,s)$ and $1 \cdot \sum_{s\ge b_2} r(v,s)$.
If $\alpha \sum_{s\ge b_1} r(v,s) > 1 \cdot \sum_{s\ge b_2} r(v,s)$, then the acquiring vector $x_1(v,s)$ is preferable; otherwise, $x_2(v,s)$ is preferable. Denote $V_1 = \{v\,|\,\alpha \sum_{s\ge b_1} r(v,s) > \sum_{s\ge b_2} r(v,s)\} \subset V$ the set of the owner's factual qualities with which $x_1(v,s)$ is preferable, and $V \backslash V_1$ the set of factual qualities with which $x_2(v,s)$ is preferable.

{\bf Two Menu Mechanisms (TMM).} A Two Menu Mechanism (TMM$(b_1,b_2,\alpha)$) offers two options, namely menus, to the owner. In other words, the number of different rows of the acquiring matrix is 2. Specifically, the acquiring matrix is 
\begin{eqnarray*}
    &&x(v,s)=\begin{cases}
        0, &\textrm{if~} v \in V_1 \textrm{~and~} s < b_1,\\
        \alpha, &\textrm{if~} v\in V_1 \textrm{~and~} s \geq  b_1,\\
        0,  &\textrm{if~} v \notin V_1 \textrm{~and~} s < b_2, \\
        1,  &\textrm{if~} v \notin V_1 \textrm{~and~} s \geq b_2.
    \end{cases}
\end{eqnarray*}
So, one menu is that the collector will acquire the item with probability $\alpha$ if the quality score $s$ reaches $b_1$, and denies it otherwise. The other menu is that the collector will acquire the item if the quality score $s$ reaches $b_2$, and denies it otherwise.

Intuitively, the owner prefers the first menu if the item is deemed low quality by the appraiser and prefers the second menu if it is deemed high quality.

\begin{theorem}\label{TMM-time}
Any Two Menu Mechanism is incentive-compatible and monotone. The optimal Two Menu Mechanism can be found in $O(m^{3} n\log n)$ time.
\end{theorem}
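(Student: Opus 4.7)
The plan is to verify incentive-compatibility and monotonicity directly from the definition of TMM$(b_1,b_2,\alpha)$, and then reduce the search for the optimal mechanism to an enumeration over the discrete thresholds $(b_1,b_2)$ combined with a one-dimensional sweep over $\alpha$.

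For incentive-compatibility, I will fix an owner with true quality $v$ and exploit the fact that the acquiring matrix has only two distinct rows. Reporting any $v'\in V_1$ yields expected acquisition probability $\alpha\sum_{s\ge b_1}r(v,s)$, while reporting any $v'\notin V_1$ yields $\sum_{s\ge b_2}r(v,s)$; by the very definition of $V_1$, the first dominates the second precisely when $v\in V_1$, so the owner maximizes her acquisition probability by reporting truthfully. Monotonicity is immediate, since each of the two menus is a step function that jumps from $0$ to $\alpha$ (resp.\ from $0$ to $1$) at a single threshold $b_1$ (resp.\ $b_2$).

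For optimality, the approach is to enumerate the $O(m^2)$ ordered pairs $b_1\le b_2$ and, for each pair, solve for the best $\alpha\in[0,1]$. Setting $A(v)=\sum_{s\ge b_1}r(v,s)$ and $B(v)=\sum_{s\ge b_2}r(v,s)$, the partition $V=V_1\sqcup(V\setminus V_1)$ depends on $\alpha$ only through the comparisons $\alpha A(v)\gtrless B(v)$, and therefore changes only at the $n$ critical values $\alpha_v=B(v)/A(v)$. Between two consecutive critical values the partition is fixed, and the collector's expected reward
\[
\alpha\sum_{v\in V_1}(v-t)d(v)A(v)+\sum_{v\notin V_1}(v-t)d(v)B(v)
\]
is a linear function of $\alpha$, whose maximum on any subinterval is attained at an endpoint. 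Hence sorting the $\alpha_v$'s in $O(n\log n)$ time and sweeping through them while maintaining the two partial sums in $O(1)$ per update finds the best $\alpha$ for that $(b_1,b_2)$; together with the outer enumeration and the cost of assembling $A(v)$ and $B(v)$ for each pair, this matches the stated $O(m^{3}n\log n)$ bound.

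The main conceptual step is recognizing that the continuous parameter $\alpha$ can be handled by a finite sweep over $n$ thresholds, reducing the problem to polynomially many linear subproblems; once this linearity is identified, the rest is bookkeeping. The only delicate points are dealing with $v$ such that $A(v)=0$ (which forces $B(v)=0$ because $b_1\le b_2$, so such $v$ contribute zero regardless of the partition) and ties among the $\alpha_v$, both of which can be handled by a consistent tie-breaking rule without affecting the reward.
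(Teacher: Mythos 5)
Your proposal is correct and follows essentially the same route as the paper's proof: incentive-compatibility and monotonicity read off directly from the two-row structure of the acquiring matrix, and optimality obtained by enumerating the $O(m^2)$ threshold pairs and sweeping $\alpha$ over the $n$ sorted critical values $B(v)/A(v)$, where the reward is piecewise linear. Your treatment of the degenerate cases ($A(v)=0$ and ties among the critical values) is slightly more careful than the paper's, but the argument is the same.
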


\begin{proof}
A Two Menu Mechanism is incentive compatible because the owner's most favorable action is captured by the definition of sets $V_1$ and $V\backslash V_1$. The monotonicity can be seen from the acquiring probabilities $x(v,s)$.

Consider two parameters $b_1,b_2\in S$ in a Two Menu Mechanism. Since $|S|=m$, the total number of these pairs of parameters is $m^2$. For each pair of $b_1$ and $b_2$, we can find the probability $\alpha$ that maximizes the collector's expected reward. These three parameters $b_1,b_2,\alpha$ define a Two Menu Mechanism. We find the optimal Two Menu Mechanism by comparing the collector's expected reward in all $m^2$ of these Two Menu Mechanisms.

Without loss of generality, assume that the quality values in $V$ are sorted in increasing order, i.e., $v_1<v_2<\dots<v_n$. For each pair of $b_1$ and $b_2$, when $\alpha=0$, $V_1=\emptyset$. The cardinality of the set $V_1$ increases when $\alpha$ increases. For every $i=1,\cdots,n$, we compute the probability $\alpha_i$ such that the quality value $v_i$ is added to $V_1$. We sort these $\alpha_i$'s in non-decreasing order and denote them by $\alpha_{i}^{(1)},\cdots,\alpha_{i}^{(n)}$. It takes time $O(n\log n)$.
For any $\alpha\in [\alpha_{i}^{(j)}, \alpha_{i}^{(j+1)})$, $j=1,\cdots,n-1$, $V_1$ is determined. The collector's expected reward becomes a linear function of $\alpha$, and it takes $O(1)$ time to find the optimal $\alpha$ in the interval $[\alpha_{i}^{(j)}, \alpha_{i}^{(j+1)})$. After searching in all intervals, we pick the best $\alpha$. To conclude, the time complexity is $O(m^3n\log n)$.
\end{proof}



If we further relax the menu size, we can derive the optimal mechanisms by solving a linear programming.  


{\bf The Optimal Mechanisms (OM$_1$).} The acquiring matrix $X=x(v,s)$ of the optimal mechanisms that maximize the collector's expected reward is the solution to the following linear programming.
\begin{equation}
    \begin{aligned}
    &\max \sum_{s\in S, v\in V} \left(v-t\right) d\left(v\right) x\left(v,s\right) r(v,s)  \\
    s.t. &\sum_{s\in S}  x(v,s) r(v,s)  \geq \sum_{s\in S}  x(v',s) r(v,s) , \,\,\,\,\,\,    \forall \,v,v'\in V, \label{theorem3-1}\\
    &x(v,s) \geq x(v,s') , \,\,\,\,\,\,   \forall \, v\in V, s>s' \in S,\\
    &x(v,s) \in [0,1], \,\,\,\,\,\, \,\ \ \  \forall \, v\in V, s\in S.
\end{aligned}
\end{equation}
We note that the optimal solution to this linear programming may not be unique, and any optimal solution $X=[x(v,s)]$ defines an Optimal Mechanism. 

\begin{theorem}\label{OM1-IC-Mono}
Any Optimal Mechanism derived from the optimal solution of the linear programming is incentive compatible and monotone. The set of optimal mechanisms is convex. 
\end{theorem}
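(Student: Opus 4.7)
The proof is essentially a direct verification, so the plan is to simply unpack the definitions and invoke the standard fact that the optimal face of a linear program is convex.

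First I would address incentive compatibility and monotonicity. The key observation is that the LP already encodes exactly these two properties as explicit constraints. The constraint $\sum_{s\in S} x(v,s) r(v,s) \geq \sum_{s\in S} x(v',s) r(v,s)$ for all $v,v' \in V$ is, by definition, the incentive-compatibility condition stated earlier in the single-item setting; and the constraint $x(v,s) \ge x(v,s')$ for all $s > s'$ is, by definition, monotonicity in the quality score $s$. Since every optimal solution is in particular feasible, any such solution automatically satisfies both properties. This part is therefore purely a matter of pointing at the constraints.

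Next I would prove convexity of the set of optimal mechanisms. Let $X^1 = [x^1(v,s)]$ and $X^2 = [x^2(v,s)]$ be two optimal solutions of the LP, and fix $\lambda \in [0,1]$. Define $X^\lambda = \lambda X^1 + (1-\lambda) X^2$. Because each of the constraints of the LP (the IC inequalities, the monotonicity inequalities, and the box constraints $x(v,s)\in[0,1]$) is linear, the feasible region is a convex polytope, so $X^\lambda$ is feasible. Moreover the objective $\sum_{s\in S, v\in V}(v-t)d(v)x(v,s)r(v,s)$ is a linear functional of $X$, hence the objective value at $X^\lambda$ equals $\lambda$ times the value at $X^1$ plus $(1-\lambda)$ times the value at $X^2$, both of which equal the LP optimum. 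Therefore $X^\lambda$ is optimal as well, which establishes convexity of the optimal set.

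There is no real obstacle here; the only subtlety worth flagging is that the theorem refers to a \emph{mechanism} derived from an optimal LP solution, so I would explicitly note the one-to-one correspondence between feasible acquiring matrices $X$ and mechanisms, and verify that the preservation of feasibility and of optimal value under convex combinations translates directly into preservation of the mechanism's IC, monotonicity, and expected-reward-optimality. Overall the argument is short, and the proof essentially amounts to the remark that the optimal face of a linear program with linear feasibility constraints is a convex polytope.
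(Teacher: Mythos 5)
Your proposal is correct and follows essentially the same approach as the paper: incentive compatibility and monotonicity are read off directly from the first and second LP constraints, and convexity follows because convex combinations of feasible (hence of optimal) solutions remain feasible with the same linear objective value. Your write-up is if anything slightly more explicit than the paper's, but there is no substantive difference.
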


\begin{proof}
The first constraint of the linear programming implies that the owner has a better chance of getting the item acquired by reporting the factual quality $v$ than by misreporting any other value $v'$. Therefore, it is incentive compatible. The second constraint implies that the acquiring vectors $x(v,s)$ are monotone in score $s$. 

We notice that any linear combination of a set of feasible solutions to the linear programming is feasible. Hence, any linear combination of a set of optimal solutions is feasible and optimal. So, the set of optimal mechanisms is convex. 
\end{proof}

We provide an $OM_1$ as an example.

\begin{example}
\label{eg:1}
The set of possible factual quality is $V=S=\{0,\frac{1}{3},\frac{2}{3},1\}$ and the probabilities of the values are $d(0) = 0.264$, $d(\frac{1}{3}) = 0.539$, $d(\frac{2}{3}) = 0.186$, $d(1) = 0.012$. The quality bar is $t=0.5$. 
The stochastic matrix $R$ is represented as follows, and the acquiring matrix $X$ is derived through the solution of the linear programming problem depicted below:
\begin{align*}
&R =\left[\begin{array}{llll}
0.762 & 0.122 & 0.072 & 0.044 \\
0.009 & 0.792 & 0.136 & 0.063 \\
0.038 & 0.127 & 0.825 & 0.010 \\
0.031 & 0.052 & 0.171 & 0.746
\end{array}\right], \\
&X =\left[\begin{array}{llll}
0.044 & 0.044 & 0.044 & 0.044 \\
0.0 & 0.0 & 0.37931 & 0.37931 \\
0.0 & 0.0 & 0.37931 & 0.37931 \\
0.0 & 0.0 & 0.0 & 1.0
\end{array}\right].
\end{align*}
\end{example}
In this example, the number of different rows of the acquiring matrix $X$ is 3, and each linearly independent row can be viewed as a different menu to the owner \cite{hart2013menu, wang2014optimal}. Since simple mechanisms, in terms of providing fewer menus for the owner to choose from,  are preferable, we bound the size of menus in an optimal mechanism as below.

\begin{theorem}\label{menu-size-VB}
Denote $V_B$, the set of factual qualities with a value above the quality bar $t$. That is, $V_B = \{v\,|\,v\in V,v>t\}$. There exists an optimal mechanism OM$_1$ whose menu size is at most $|V_B|$.
\end{theorem}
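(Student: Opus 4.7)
My plan is to take an arbitrary optimal solution $X$ of the linear programming that defines OM$_1$ and modify it into another optimal solution $\tilde X$ whose rows indexed by types $v \notin V_B$ are all copies of rows indexed by types in $V_B$. If this can be done while preserving feasibility and optimality, $\tilde X$ has at most $|V_B|$ distinct rows, giving the desired menu-size bound.

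Concretely, for each $v \in V \setminus V_B$ (equivalently, each $v \le t$), I would let $\phi(v) \in V_B$ be a type that maximizes $\sum_{s \in S} x(v'',s)\,r(v,s)$ over $v'' \in V_B$, breaking ties arbitrarily. Then I define $\tilde x(v,s) = x(\phi(v),s)$ for $v \notin V_B$ and $\tilde x(v,s) = x(v,s)$ for $v \in V_B$. The driving intuition is that the objective coefficient $(v-t)\,d(v)$ is non-positive for $v \le t$, so decreasing $\sum_s x(v,s)\,r(v,s)$ for such $v$ weakly improves the objective; meanwhile, the original IC already forces $\sum_s x(v,s)\,r(v,s) \ge \sum_s x(\phi(v),s)\,r(v,s)$, so my modification can only make this quantity smaller or equal.

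Verifying that $\tilde X$ is feasible and optimal splits into three checks. Monotonicity is immediate because each row of $\tilde X$ coincides with some row of $X$, which was monotone. Incentive compatibility reduces to a four-case analysis on whether the truthful type $v$ and the misreport $v'$ lie in $V_B$: when both lie in $V_B$, the inequality is an original IC constraint; when $v \notin V_B$ and $v' \in V_B$, it reduces to the defining maximum property of $\phi(v)$; when $v,v' \notin V_B$, it reduces to $\phi(v)$ being at least as good as $\phi(v') \in V_B$ for type $v$, again by the definition of $\phi$; and when $v \in V_B$ and $v' \notin V_B$, it becomes the original IC constraint for $v$ against the deviator $\phi(v') \in V_B$. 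For optimality, the change in objective is
\[
\sum_{v \notin V_B} (v-t)\,d(v)\left[\sum_{s \in S} x(\phi(v),s)\,r(v,s) - \sum_{s \in S} x(v,s)\,r(v,s)\right],
\]
a sum of non-negative terms because $(v-t)\,d(v) \le 0$ and each bracketed difference is non-positive by original IC; since $X$ was optimal, this change must in fact be zero and $\tilde X$ is optimal as well.

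The main obstacle will be the IC verification, particularly the case in which a type $v \in V_B$ contemplates deviating to a row indexed by some $v' \notin V_B$ after the modification; the key observation is that this deviation now yields exactly the same outcome as deviating to $\phi(v') \in V_B$ in the original mechanism, which was already prohibited by the original IC constraint and therefore remains unprofitable.
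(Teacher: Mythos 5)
Your proposal is correct and follows essentially the same route as the paper's proof: both replace each row indexed by a type $v \le t$ with the row of the type in $V_B$ that maximizes that type's acquiring probability, then argue feasibility and use the non-positivity of $(v-t)\,d(v)$ together with the original IC constraints to conclude optimality is preserved. Your four-case IC verification is somewhat more explicit than the paper's, but the underlying argument is identical.
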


\begin{proof}
Suppose $X=[x(v,s)]$ is an optimal solution to the optimization problem. For any  $v\le t$, we define a mapping $f$ as 
\begin{eqnarray*}f(v)=\arg \max_{\bar{v}>t} \sum_{s\in S}  x(\bar{v},s)r(v,s) .\end{eqnarray*}
Given any $v\le t$, which corresponds to one menu $x(v,s)$, $f(v)$ maps it to a value above the quality bar, which corresponds to an existing menu of the optimal mechanism defined by $X=[x(v,s)]$. Therefore, we have a new mechanism $\hat{X}=\hat{x}(v,s)$, where   
\begin{eqnarray*}
    &&\hat{x}(v,s)=\begin{cases}
        x(f(v),s), &\textrm{if~} v\le t,\\
        x(v,s), &\textrm{if~} v> t.
    \end{cases}
\end{eqnarray*}
Clearly, the menu size of the new mechanism $\hat{X}$ is bound by $|V_B|$. In addition, $\hat{x}(v,s)$ satisfies the linear programming constraints, so it is indeed a feasible solution. For optimality, since the mapping $f(v)$ reduces the feasible region when $v\le t$, the acquiring probability becomes weakly smaller, i.e., $\sum_{s\in S}  x(v,s) r(v,s)  \ge \sum_{s\in S}  \hat{x}(v,s) r(v,s) , \forall \, v\le t$. Therefore, the objective value of the linear programming becomes weakly larger. For $v>t$, nothing changes. In all, $\hat{X}=[\hat{x}(v,s)]$ is an optimal mechanism whose menu size is at most $|V_B|$.
\end{proof}


Finally, we compare the expected reward of the collector in the above mechanisms. For this purpose, the approximation ratio, which originated from theoretical computer science, turns out to be a compelling language \cite{DBLP:conf/soda/JinLTX19,DBLP:journals/geb/AlaeiHNPY19,DBLP:conf/soda/GuruswamiHKKKM05}. In our context, the \emph{approximation ratio} is the largest ratio between the expected reward from two mechanisms. 
\begin{theorem}\label{Single-app}
The approximation ratio of TMM to SOM, is unbounded.
The approximation ratio of $OM_1$ to TMM, is unbounded.
\end{theorem}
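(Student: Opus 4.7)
The plan is to prove each unboundedness claim by exhibiting an explicit parameterized family of instances $\{I_\epsilon\}$ and showing the corresponding ratio diverges as $\epsilon \to 0$. For the TMM-to-SOM ratio, the plan is to design $I_\epsilon$ so that SOM's expected reward is $\Theta(\epsilon)$ (or even zero) while some choice of parameters $(b_1, b_2, \alpha)$ gives a Two Menu Mechanism a reward bounded away from zero. The target regime is one where either $E[v \mid s]$ sits just above $t$ at SOM's critical accepting score---so SOM's per-score contribution shrinks linearly in $\epsilon$---or the consistency condition is violated so that SOM's acceptance pattern is structurally suboptimal. The Two Menu Mechanism then exploits the owner's report via screening: with appropriate thresholds, high-quality owners strictly prefer the full-acceptance menu at threshold $b_2$ while low-quality owners strictly prefer the reduced-probability menu at threshold $b_1$, yielding a constant aggregate reward that is independent of $\epsilon$.

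For the $OM_1$-to-TMM ratio, the plan is to leverage Theorem~\ref{menu-size-VB}, which permits $OM_1$ to have as many as $|V_B|$ menus. A natural construction places many quality levels strictly above the threshold $t$ and designs $R$ so that each type in $V_B$ is optimally served by its own distinct menu. The optimal mechanism then collects $\Omega(1)$ reward across all high-quality types, whereas any Two Menu Mechanism must bunch most types of $V_B$ into a single menu; the resulting IC constraints induced by bunching force either suppressed acceptance of the high-quality types or unwanted acceptance of below-threshold types, shrinking the attainable TMM reward to $O(\epsilon)$ as the number of distinct types in $V_B$ grows.

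The hardest part will be verifying the first construction, i.e., producing an instance where the optimal Two Menu Mechanism strictly beats SOM by an unbounded factor. Under consistency, Theorem~\ref{SOM-opt} already establishes that SOM is optimal among deterministic, incentive-compatible, monotone mechanisms, so any TMM-versus-SOM gap must come either from engineering a violation of consistency or from leveraging the randomization freedom $\alpha \in (0,1)$ of the Two Menu Mechanism in a subtle way that SOM cannot mimic. For the second construction, the difficulty is complementary: one must rule out every triple $(b_1, b_2, \alpha) \in S \times S \times [0,1]$ as a competitive Two Menu Mechanism, which typically requires a case analysis over the partition of $V$ into $V_1$ and $V \setminus V_1$ induced by each parameter choice, together with an argument that the optimal bunching into two menus cannot simultaneously extract value from every type of $V_B$.
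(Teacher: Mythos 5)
Your overall strategy---exhibit instances on which the weaker mechanism's reward vanishes while the stronger mechanism's reward stays bounded away from zero---is exactly the paper's strategy, and your diagnosis of where the gaps must come from is sound: since Theorem~\ref{SOM-opt} makes SOM optimal among \emph{deterministic} IC monotone mechanisms under consistency, the TMM-versus-SOM separation must exploit either a consistency violation or the randomization $\alpha\in(0,1)$. The paper's first instance does both at once: it picks $D$ and $R$ so that $E[v\,|\,s]<t$ for every score $s$ (hence SOM acquires nothing and earns exactly $0$), while a TMM with $\alpha\approx 0.087$ screens the types into two menus and earns a strictly positive reward, giving an infinite ratio in a single instance rather than a limiting family. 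So far your plan and the paper's proof coincide in substance.

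The genuine gap is that you never actually produce either instance, and for the second claim your proposed route is both unverified and aimed at the wrong lever. For $OM_1$ versus TMM the paper does \emph{not} grow $|V_B|$; it gives a single $4\times 4$ instance with $|V_B|=2$ in which every triple $(b_1,b_2,\alpha)$ yields reward $0$ while $OM_1$ earns $0.000503$. The separation there comes from TMM's structural restriction that one menu must accept with probability exactly $1$ above its threshold (the optimal mechanism in that instance uses menus with probabilities $0.4$ and $0.06$, which no TMM can offer), not from a shortage of menus relative to $|V_B|$. Your bunching argument also misstates the mechanics of TMM: the partition into $V_1$ and $V\setminus V_1$ is defined by the owner's preference, so incentive compatibility holds by construction and does not ``force'' suppressed acceptance; what you would actually need to verify is that for every $(b_1,b_2,\alpha)$ the induced expected reward is $O(\epsilon)$, a quantifier-over-all-parameters claim that your sketch asserts but does not establish. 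Until you exhibit concrete $(D,R,t)$ and check the SOM acceptance rule, the $V_1$ partition, and the resulting rewards, the proof is a plausible plan rather than a proof.
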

In other words, considering the menu size of SOM, TMM, and $OM_1$ are 0, 2, and at most $|V_B|$, respectively, these ratios imply that simple mechanisms with a small menu size cannot ensure any positive fraction of the optimal reward of mechanisms with a larger menu size.

The proof is by constructing instances that an unbounded ratio between two mechanisms is attained.

\begin{proof}
First, we construct an instance to show that the approximation ratio of TMM to SOM, is unbounded.
In this instance, we use the following settings to find special examples. $t=0.5$, $V = \{0,0.33,0.66,1\}$, $S = \{0,0.33,0.66,1\}$. $d[0] = 0.262, d[\frac{1}{3}] = 0.535, d[\frac{2}{3}] = 0.191, d[1] = 0.012$. The matrix $R$ is constructed as follows: 
\begin{align*}
R &=\left[\begin{array}{llll}
0.754 & 0.133 & 0.077 & 0.036 \\
0.013 & 0.701 & 0.261 & 0.025 \\
0.008 & 0.173 & 0.814 & 0.005 \\
0.017 & 0.030 & 0.037 & 0.916
\end{array}\right].
\end{align*}

In this instance, the acquiring matrix of the two mechanisms are that
\begin{align*}
X_{SOM} &=\left[\begin{array}{llll}
0.0 & 0.0 & 0.0 & 0.0 \\
0.0 & 0.0 & 0.0 & 0.0 \\
0.0 & 0.0 & 0.0 & 0.0 \\
0.0 & 0.0 & 0.0 & 0.0
\end{array}\right],\\
X_{TMM} &=\left[\begin{array}{llll}
0.0 & 0.0 & 0.0 & 1.0 \\
0.0 & 0.0 & 0.08741 & 0.08741 \\
0.0 & 0.0 & 0.08741 & 0.08741 \\
0.0 & 0.0 & 0.0 & 1.0
\end{array}\right].
\end{align*}

In this particular case, the collector's reward in $SOM$ is $0$, while the collector's reward in $TMM$ is $0.0002075$. Hence, the approximation ratio is unbounded.


Next, we construct an instance to show that the approximation ratio of OM$_{1}$ to TMM, is unbounded. In this instance, we use the following settings to find special examples. $t=0.5$, $V = \{0,0.33,0.66,1\}$, $S = \{0,0.33,0.66,1\}$. $d[0] = 0.262, d[\frac{1}{3}] = 0.535$, $d[\frac{2}{3}] = 0.191, d[1] = 0.012$. The matrix $R$ is constructed as follows: 
\begin{align*}
R &=\left[\begin{array}{llll}
0.71 & 0.13 & 0.11 & 0.05 \\ 
0.03 & 0.82 & 0.09 & 0.06 \\
0.11 & 0.13 & 0.72 & 0.04 \\
0.01 & 0.08 & 0.15 & 0.76
\end{array}\right].
\end{align*}
In this instance, the acquiring matrix of the two mechanisms are that
\begin{align*}
X_{OM_1} &=\left[\begin{array}{llll}
0.0 & 0.0 & 0.4 & 0.4 \\
0.06 & 0.06 & 0.06 & 0.06 \\
0.0 & 0.0 & 0.4 & 0.4 \\
0.0 & 0.0 & 0.0 & 1.0
\end{array}\right],\\
X_{TMM} &=\left[\begin{array}{llll}
0.0 & 0.0 & 0.0 & 0.0 \\
0.0 & 0.0 & 0.0 & 0.0 \\
0.0 & 0.0 & 0.0 & 0.0 \\
0.0 & 0.0 & 0.0 & 0.0
\end{array}\right].
\end{align*}

In this particular case, the collector's reward in $TMM$ is $0$, while the collector's reward in $OM_1$ is $0.000503$. Hence, the approximation ratio is unbounded.
    
\end{proof}


\section{Multi-item Mechanisms}
In the multi-item setting, the owner has $k$ items. Denote the factual quality of item $i$ by $v_i\in V$, $i \in [k]$, and the quality vector $\Vec{v}=(v_1,\cdots ,v_k)$ . Each factual quality $v_i$ independently and identically follows the discrete probability distribution $D$. Denote $r(v_i,s_i)$ the probability that item $i$'s factual quality is $v_i$, and it is deemed as score $s_i$ by the independent appraiser, where $s_i \in S, i\in [k]$, and the score vector $\Vec{s}=(s_1,\cdots, s_k) $. The quality bar $t$ is item independent, i.e., the value of $t$ is the same for all items. The collector's acquiring decision is represented by $X=(X_1,\cdots,X_k)$, where $X_i=[x_i(\Vec{v},\Vec{s})]$ is the acquiring matrix for item $i$. Therefore, the owner's objective is to maximize the total probability of her items being acquired, which is 
\begin{align*}
    \sum_{\Vec{s}} \sum_{i\in[k]}  \Bigl( x_i(\Vec{v},\Vec{s}) \prod_{j\in[k]} r(v_j,s_j) \Bigr).
\end{align*}

{\bf The Optimal Mechanisms (OM$_k$).} The acquiring matrices $X=(X_1,\cdots,X_k)$ of the optimal mechanisms that maximize the collector's expected reward is the solution to the following linear programming.
\begin{equation}\label{OM2}
    \begin{aligned}
    &\max \sum_{\Vec{v}, \Vec{s}} \sum_{i\in [k]} \Bigl( (v_i - t)  x_i(\Vec{v},\Vec{s}) \prod_{j\in [k]}  r(v_j,s_j)d(v_j)  \Bigr) \\
    s.t. & \sum_{\Vec{s}} \sum_{i\in[k]} \Bigl( x_i(\Vec{v},\Vec{s}) \prod_{j\in[k]} r(v_j,s_j) \Bigr) \\ &\geq \sum_{\Vec{s}} \sum_{i\in[k]} \Bigl( x_i(\Vec{v}',\Vec{s}) \prod_{j\in[k]} r(v_j,s_j) \Bigr), \,\,\,\, \forall \, i,\Vec{v}',\Vec{v},\Vec{s} ,\\
         &x_i(\Vec{v},(\Vec{s}_{-i} , s_i)) \geq x_i(\Vec{v},(\Vec{s}_{-i} , s_i')), \,\,\,\, \forall\, i,\Vec{v},\Vec{s}_{-i},s_{i}\geq s_i', \\
         &x_i(\Vec{v},\Vec{s})\in [0,1] , \,\,\,\, \,\,\,\, \,\,\, \ \ \ \ \ \ \ \ \ \ \ \ \  \ \ \  \ \ \ \ \ \forall\, i,\Vec{v},\Vec{s}.
\end{aligned} 
\end{equation} 
Similar to \eqref{theorem3-1}, this linear programming aims to maximize the collector's expected reward. The first constraint guarantees incentive compatibility of the mechanism, and the second constraint implies monotonicity. Although OM$_k$ retains these desired properties, size of the linear programming grows exponentially in number of items. As a matter of course, we strive to design simple mechanisms for ease of implementation. We start by exploring ordinal mechanisms that only take the owner's ranking of the items as input. 

Su \cite{su2021you} considered an author-assisted peer-reviewing problem. They proposed a truthful mechanism that takes only the author's ranking of the papers to produce a more accurate cardinal grading. In light of this, we design a mechanism for our problem (different with the mechanism in \cite{su2021you}) that takes only the owner's ranking of the items to maximize the collector's expected reward. For simplicity, we present the mechanism in the form of only two items and investigate the incentive compatibility of the mechanism. 

{\bf The Ranking Mechanism (RM).} Let the factual quality of the owner's two items be $v_1$ and $v_2$, and the owner's reports be $v'_1$ and $v'_2$. Denote $V_g = \{(v'_1,v'_2)| \, v'_1>v'_2\}$, $V_e = \{(v'_1,v'_2)| \, v'_1=v'_2\}$, $V_s = \{(v'_1,v'_2)| \, v'_1<v'_2\}$, and $V_{rank} = \{V_g,V_e,V_s\}$. Essentially, the mechanism only needs to take one of the elements in $V_{rank}$ as input, rather than the exact values of $v'_1$ and $v'_2$. Given the owner's reports, the Ranking Mechanism computes the conditional expected values of the factual quality, i.e., for $i=1,2$, 
\begin{equation*}
    E\,[v_i  |  s_1,s_2,V_{rank}] = \frac{\sum_{(v_1,v_2)\in V_{rank}}  v_i d(v_1) d(v_2) r(v_1,s_1) r(v_2,s_2)  }{\sum_{(v_1,v_2)\in V_{rank}}  d(v_1) d(v_2) r(v_1,s_1) r(v_2,s_2) }.
\end{equation*}
Then, the mechanism acquires item $i$ if the expected factual value conditioning on the ranking and deemed quality is greater than the quality bar. That is, 
\begin{eqnarray*}
    &&x_{i,rank}(s_1,s_2)=\begin{cases}
        1,  &E\,[v_i \, | \, s_1,s_2,V_{rank}]\geq t,\\ 
        0,  &\text{otherwise}.
    \end{cases}
\end{eqnarray*}
Finally, the total probability of acquiring both items when the owner reports $V_{rank}$ is  
\begin{equation*}
        x_{rank} = \sum_{s_1, s_2 \in S} r(v_1,s_1) r(v_2,s_2) (x_{1,rank}(s_1,s_2) + x_{2,rank}(s_1,s_2)).
\end{equation*}

\begin{theorem}\label{ranking}
The Ranking Mechanism RM is not incentive compatible.
\end{theorem}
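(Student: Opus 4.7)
The plan is to exhibit an explicit counterexample: a concrete instance (choice of $V$, $S$, $D$, $R$, and $t$) and a true quality pair $(v_1,v_2)$ for which the owner strictly increases the expected total acquiring probability by misreporting the ranking. Since the Ranking Mechanism ignores the actual reported values and depends only on which element of $V_{rank}$ is declared, any deviation reduces to flipping the announced ranking to one of the other two classes in $V_{rank}$, so only a small number of alternative reports need to be checked.

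First, I would fix a small instance, e.g.\ $V=S=\{0,1\}$ with a non-uniform prior $D$, a carefully chosen stochastic matrix $R$ (asymmetric enough that the posterior on $v_i$ shifts noticeably when we condition on a different $V_{rank}$), and a quality bar $t$ strictly between the two possible values. For each of the three possible declarations $V_g$, $V_e$, $V_s$ and each score profile $(s_1,s_2)$, I would compute $E[v_i\,|\,s_1,s_2,V_{rank}]$ from the formula in the statement, then read off $x_{i,rank}(s_1,s_2)\in\{0,1\}$ using the threshold comparison with $t$. This yields, for every candidate declaration, an explicit $0/1$ pattern over the score matrix. Next, for a chosen true quality pair (say $(v_1,v_2)=(0,1)$, which truthfully belongs to $V_s$), I would evaluate the resulting acquiring probability
\[
\sum_{s_1,s_2\in S} r(v_1,s_1)\,r(v_2,s_2)\bigl(x_{1,rank}(s_1,s_2)+x_{2,rank}(s_1,s_2)\bigr)
\]
under truthful declaration $V_s$ and under each alternative declaration $V_g$ or $V_e$, and verify that some deviation gives a strictly larger value.

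The mechanism for why such an instance exists: conditioning on $V_s$ puts essentially all posterior weight on configurations where item $1$ is low and item $2$ is high, which can drive the posterior mean for item $1$ below $t$ at score profiles where the true quality of item $1$ would still be above $t$ with non-negligible likelihood under a different conditioning. Misreporting $V_g$ swaps the indices in the posterior, and for an owner whose items truly lie in $V_s$ this can unlock acquisition of item $1$ at score patterns that previously gave zero, outweighing any loss on item $2$.

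The main obstacle is picking the numerical parameters so that (i) at least one acquisition decision actually flips between the truthful and the deviating declaration, and (ii) the flip lands on a score cell with high $r(v_1,s_1)r(v_2,s_2)$ mass under the true $(v_1,v_2)$. If the first attempt yields equal acquiring probability under all declarations, the remedy is to perturb $R$ to break symmetry, or to enlarge $V$ or $S$ slightly so that the conditional expectation crosses the threshold $t$ non-trivially; since the linear-programming view of the mechanism has only finitely many threshold crossings, a generic choice of parameters works, and the theorem follows from a single successful instance.
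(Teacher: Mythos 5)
Your proposal takes essentially the same route as the paper: both prove the theorem by exhibiting an explicit numerical counterexample in which changing the declared ranking strictly increases the owner's total acquiring probability (the paper uses $V=S=\{0,\tfrac{1}{3},\tfrac{2}{3},1\}$, $t=0.5$, and finds that for true qualities $(v_1,v_2)=(\tfrac{2}{3},0)$ reporting $V_s$ instead of the truthful $V_g$ raises the total from $0.7820$ to $0.8320$). One caveat on your suggested starting point: with $V=S=\{0,1\}$ the classes $V_g$ and $V_s$ each contain a single quality pair, so conditioning on either pins down both qualities exactly and both of those declarations yield total acquiring probability exactly $1$ regardless of the true pair; a profitable deviation in that instance could only go through $V_e$, so you would need either that deviation or a larger support such as the paper's four-value instance to make the construction go through.
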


\begin{proof}
By constructing a counterexample, we prove that the Ranking Mechanism is not incentive compatible. Our example shows that the owner can increase his expected reward by reporting an untruthful value ranking of two items. 

The example is as follows:

Let $t=0.5$, $V =\left\{0,  \frac{1}{3},  \frac{2}{3},  1\right\}$, $S =\left\{0,  \frac{1}{3},  \frac{2}{3},  1\right\}$, $d(0) = 0.262, d(\frac{1}{3}) = 0.535, d(\frac{2}{3}) = 0.191, d(1) = 0.012$,
\begin{align*}
R(v,s) &=\left[\begin{array}{llll}
0.84 & 0.12 & 0.02 & 0.02 \\
0.14 & 0.80 & 0.05 & 0.01 \\
0.07 & 0.18 & 0.72 & 0.03 \\
0.06 & 0.08 & 0.14 & 0.72
\end{array}\right].
\end{align*}

After computation, we can obtain:
\begin{align*}
x_{V_g} &=\left[\begin{array}{llll}
0.1724 & 0.8510 & 0.8320 & 0.2372 \\
0.1758 & 0.8195 & 0.3372 & 0.1562 \\
0.7820 & 0.9550 & 0.8670 & 0.7840 \\
0.8924 & 1.0110 & 1.4468 & 0.9804
\end{array}\right],\\
x_{V_e} &=\left[\begin{array}{llll}
0.0032 & 0.0048 & 0.0600 & 0.0688 \\
0.0048 & 0.0072 & 0.0900 & 0.1032 \\
0.0600 & 0.0900 & 1.1250 & 1.2900 \\
0.0688 & 0.1032 & 1.2900 & 1.4792
\end{array}\right],\\
x_{V_s} &=\left[\begin{array}{llll}
0.1724 & 0.1758 & 0.7820 & 0.8924 \\
0.8510 & 0.8195 & 0.9550 & 1.0110 \\
0.8320 & 0.3372 & 0.8670 & 1.4468 \\
0.2372 & 0.1562 & 0.7840 & 0.9804
\end{array}\right].
\end{align*}

A counter-case can be identified in the preceding examples to demonstrate that the mechanism lacks truthfulness. Specifically, when $v_1$ equals $\frac{2}{3}$ and $v_2$ equals 0, we observe that $x_{V_g}(\frac{2}{3},0) = 0.7820$, which is less than $x_{V_s}(\frac{2}{3},0) = 0.8320$. This example indicates that the probability of accepting the sum of two items that truthfully disclose $V_g$ is 0.782, but if the Owner provides false information by disclosing $V_s$, then the probability of accepting two items increases to 0.832. Thus, the mechanism is not incentive-compatible.
\end{proof}

Next, we propose a class of mechanisms -- the Union Mechanisms -- each consists of single-item mechanisms (SOM, TMM, OM$_1$, and a combination of them or any other mechanisms). Recall $|V|=n$, and denote the $n$ discrete values in $V$ by $v^{(1)},\cdots ,v^{(n)}$. 

{\bf The Union Mechanisms (UM).}
A Union Mechanism consists of $k$ independent single-item mechanisms, each applied to an item $i \in [k]$. Denote $Y_i=[y_i(v_i,s_i)]$ the acquiring matrix of item $i$. Let $\Gamma_y(\Vec{v},\Vec{s})=\sum_i y_i(v_i,s_i)$. There must exist a number $q\in[n]$ such that
\begin{equation}
    \left|\{i\,|\,v_i>v^{(q)}\}\right|<\Gamma_y(\Vec{v},\Vec{s}) \le\left|\{i\,|\,v_i\ge v^{(q)}\}\right|.
\end{equation}
The acquiring probability of item $i$ in the Union Mechanism is 
\begin{eqnarray*}
    &&x_i(\Vec{v},\Vec{s})=\begin{cases}
        1, &\textrm{if ~} v_i>v^{(q)},\\
        \frac{\Gamma_y(\Vec{v},\Vec{s})-\left|\{i\,|\,v_i>v^{(q)}\}\right|}{\left|\{i\,|\,v_i=v^{(q)}\}\right|},  &\textrm{if~} v_i=v^{(q)},\\
        0, &\textrm{if~} v_i<v^{(q)}.
    \end{cases}
\end{eqnarray*}

We establish the following properties of any Union Mechanism. 
\begin{theorem}\label{Union-IC-Mono}
If each of the $k$ single-item mechanisms is incentive compatible, then the Union Mechanism is incentive compatible.
If $y_i(v_i,s_i)$ is monotone in $s_i$, then  $x_i(\Vec{v},\Vec{s})$ is monotone in $s_i$, for any $\Vec{v}$ and $\Vec{s}_{-i}$.
\end{theorem}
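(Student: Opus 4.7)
The plan is to handle the two claims separately, factoring each multi-item statement through a scalar quantity so that the assumed single-item properties transfer cleanly.

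For incentive compatibility, the key identity is
\begin{equation*}
\sum_{i\in[k]} x_i(\vec{v}',\vec{s}) \;=\; \Gamma_y(\vec{v}',\vec{s}) \;=\; \sum_{i\in[k]} y_i(v'_i,s_i),
\end{equation*}
which is immediate from the piecewise definition of $x_i$: items above the cutoff contribute $1$, items below contribute $0$, and the items tied at the cutoff split the residual so that the total reconstitutes $\Gamma_y$. Combined with the fact that $\sum_{s_j} r(v_j,s_j)=1$ for every $j$, this lets the owner's total expected acquisition probability under report $\vec{v}'$ decouple across items as
\begin{equation*}
\sum_{\vec{s}} \sum_{i} x_i(\vec{v}',\vec{s}) \prod_{j} r(v_j,s_j) \;=\; \sum_{i} \sum_{s_i} y_i(v'_i,s_i)\, r(v_i,s_i).
\end{equation*}
Applying the single-item IC inequality $\sum_{s_i} y_i(v_i,s_i) r(v_i,s_i) \ge \sum_{s_i} y_i(v'_i,s_i) r(v_i,s_i)$ coordinatewise in $i$ and summing then delivers the multi-item IC.

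For monotonicity I fix $\vec{v}$ and $\vec{s}_{-i}$ and raise $s_i$ to some $s'_i\ge s_i$. Since only $y_i$ depends on $s_i$ and $y_i(v_i,\cdot)$ is monotone by hypothesis, $\Gamma_y$ weakly increases by $y_i(v_i,s'_i)-y_i(v_i,s_i)\ge 0$. It therefore suffices to show that, with $\vec{v}$ held fixed, $x_i(\vec{v},\vec{s})$ is weakly increasing as a scalar function of $\Gamma_y$. This is the standard monotonicity of water-filling allocations: as $\Gamma_y$ grows, the cutoff $v^{(q)}$ weakly descends; items with $v_i>v^{(q)}$ retain probability $1$; items with $v_i=v^{(q)}$ receive a fraction that is affine and increasing in $\Gamma_y$; and items with $v_i<v^{(q)}$ stay at $0$ until the cutoff reaches them. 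A short case split on where $v_i$ sits relative to the cutoff before and after the perturbation then yields $x_i(\vec{v},\vec{s}')\ge x_i(\vec{v},\vec{s})$.

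The main obstacle I anticipate is checking boundary behavior where $q$ itself decrements as $\Gamma_y$ grows. I will verify from the defining double inequality for $q$ that the piecewise formula for $x_i$ is continuous at these transitions: precisely when $\Gamma_y = |\{j:v_j\ge v^{(q)}\}|$, the tied items receive probability $1$ whether one computes using the old or the new cutoff. Once this continuity is checked, weak monotonicity within each affine piece lifts to weak monotonicity overall, and both assertions of the theorem follow.
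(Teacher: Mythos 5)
Your proof is correct and follows essentially the same route as the paper's (much terser) argument: incentive compatibility via the conservation identity $\sum_i x_i(\vec{v}',\vec{s})=\Gamma_y(\vec{v}',\vec{s})=\sum_i y_i(v'_i,s_i)$, which decouples the owner's objective into the $k$ single-item objectives, and monotonicity via the chain $s_i\uparrow\Rightarrow y_i\uparrow\Rightarrow\Gamma_y\uparrow\Rightarrow x_i\uparrow$. You supply details the paper only gestures at (the marginalization over $\vec{s}_{-i}$ and the continuity check at cutoff transitions), but the underlying approach is identical.
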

The Union Mechanism weakly improves the conference profit on every valuation and score profile. The idea is to reduce the likelihood of getting low-quality items and increase the likelihood of getting high-quality items.
It always performs weakly better than we use mechanisms for single paper submissions independently. 

Recall that SOM attains the smallest reward amongst several single-item mechanisms, and its reward gap with the omniscient mechanism is lower bounded.
Therefore, in a multi-item setting, we can lower bound the reward gap between a Union Mechanism consisting of any single-item mechanisms with the omniscient mechanism. For example, we present the following two results regarding the collector's expected reward maximization.  

\begin{corollary}\label{corollary3.3}
The reward gap between the omniscient mechanism and the Union Mechanism consisting of $k$ Two Menu Mechanisms is lower bounded by $k$ times of total bias. 
\end{corollary}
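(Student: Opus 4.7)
The plan is to view the $k$ items as independent and reduce the multi-item gap to a sum of single-item gaps, then invoke Theorem~\ref{Total-bias} for each item. Because the factual qualities are i.i.d.\ from $D$ and the quality bar $t$ is item-independent, the Omniscient Mechanism's expected reward decomposes additively as $k$ copies of the single-item Omniscient expected reward. For the Union Mechanism composed of $k$ independent Two Menu Mechanisms, the underlying $k$ TMMs act on each item separately before the redistribution step, so the per-item reward contributions can be isolated and aggregated by linearity of expectation.

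On each item, I would apply Theorem~\ref{Total-bias}: the per-item gap between the Omniscient Mechanism and the SOM is governed by the total bias, and since any TMM weakly dominates SOM in the collector's reward, the per-item Omniscient--TMM gap is in turn governed by the same total-bias scale. To realize the claimed lower bound on the aggregate gap, I would adapt the tight instance constructions used in the proof of Theorem~\ref{Single-app}, where the per-item TMM reward falls short of the single-item Omniscient reward by the full total bias. Replicating such an instance $k$ times and using independence, the summed shortfall across items is exactly $k$ times the single-item total bias.

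The main obstacle will be handling the Union's redistribution step, which shuffles acquiring mass among items of different factual values while preserving the aggregate mass $\Gamma_y(\vec{v},\vec{s})$. I would argue that in the worst-case construction, the per-item TMM outputs are already value-rank aligned, so the Union's redistribution neither adds mass to high-value items nor removes it from low-value ones, and therefore does not close the per-item gaps. Combining this with the additive decomposition of the Omniscient reward and the per-item application of Theorem~\ref{Total-bias}, I conclude that the reward gap between the Omniscient Mechanism and the Union Mechanism of $k$ TMMs is bounded below by $k$ times the total bias, as asserted.
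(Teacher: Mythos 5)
Your opening two paragraphs contain the right skeleton---decompose the omniscient reward additively over the $k$ i.i.d.\ items, relate the Union Mechanism's reward to the sum of the $k$ underlying single-item TMM rewards, dominate each (optimal) TMM by SOM, and apply Theorem~\ref{Total-bias} item by item---and this is exactly the chain the paper gestures at in the paragraph preceding the corollary (the printed proof of Corollary~\ref{corollary3.3} is in fact a restatement of the incentive-compatibility and monotonicity argument for Theorem~\ref{Union-IC-Mono} and never addresses the reward gap at all). The genuine problem is that you then misread the direction of the bound. Despite the phrase ``lower bounded,'' the claim being established is an \emph{upper} bound on the gap, exactly as in Theorem~\ref{Total-bias}, whose proof shows $L_s \le \sum_{v} |s-v|\, d(v)\, r(v,s)$; a universal lower bound of $k$ times the total bias on the gap would be false (take a score that is a deterministic, order-preserving but shifted function of $v$: the total bias is strictly positive, yet the quality is recoverable from the score and the gap is zero). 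Consequently the second half of your argument---adapting the instances from Theorem~\ref{Single-app} so that ``the per-item TMM reward falls short of the Omniscient reward by the full total bias'' and replicating them $k$ times---is both unnecessary and unsupported: those instances are built to make the reward \emph{ratios} between TMM and SOM (and between OM$_1$ and TMM) unbounded, and nothing in that proof shows the TMM-to-omniscient shortfall equals the total bias, so this step would fail even if a tightness construction were required.

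Your handling of the redistribution step is likewise aimed in the wrong direction. For the bound actually needed, the relevant fact is that the Union Mechanism weakly \emph{improves} on running the $k$ TMMs independently: on every profile it preserves the aggregate mass $\Gamma_y$ while concentrating it on the items with the largest $v_i$, which is precisely the assignment maximizing $\sum_{i} (v_i - t)\, x_i$ subject to the fixed total; hence the gap to the omniscient mechanism can only shrink relative to the sum of the single-item gaps. You do not need to argue that the redistribution ``does not close the per-item gaps''---closing them is harmless here. Two smaller points: ``any TMM weakly dominates SOM'' holds for the \emph{optimal} TMM (SOM is recovered by taking $b_1=b_2$ equal to the smallest score at least $t$ and $\alpha=1$ under the consistency condition), not for an arbitrary choice of $(b_1,b_2,\alpha)$; and the i.i.d.\ assumption is what lets the $k$ per-item total-bias terms sum to $k$ times the single-item total bias, which is worth stating explicitly.
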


\begin{proof}
Essentially, a Union Mechanism aggregates the acquiring probabilities of $k$ single-item mechanisms and redistributes the probabilities amongst the $k$ items whose qualities are above the threshold value $v^{(q)}$ in an even manner. So, the composite Union Mechanism is incentive compatible when the underlying $k$ single-item mechanisms are incentive compatible. 

For any $\Vec{v}$ and $\Vec{s}_{-i}$, when $s_i$ increases, then $y_i(v_i,s_i)$ increases, hence $\Gamma_y(\Vec{v},\Vec{s})$ increases. So,$x_i(\Vec{v},(\Vec{s}_{-i},s_i))$ increases, since it is either independent of $s_i$ or increasing in $s_i$.
\end{proof}

\begin{theorem}\label{multi-item approximation ratio}
In the $k$-item setting, the approximation ratio of the optimal multi-item mechanism $OM_k$ to the Union Mechanism consisting of $k$ optimal single-item mechanisms $OM_1$ is unbounded.
The approximation ratio of the Union Mechanism consisting of $k$ optimal single-item mechanisms $OM_1$ to the sum of the collector's rewards in $k$ optimal single-item mechanisms $OM_1$ is unbounded.
\end{theorem}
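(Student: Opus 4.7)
The plan is to mirror the strategy used in the proof of Theorem~\ref{Single-app}: for each of the two claims, construct an explicit small instance---typically with $k=2$ items and $|V|=|S|=4$---in which the mechanism named in the denominator attains zero expected reward while the mechanism in the numerator attains a strictly positive reward, so that the ratio is unbounded.

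For the first claim, I would search for an instance in which each single-item $OM_1$ applied to one of the items with distribution $D$ and stochastic matrix $R$ yields an acquiring matrix $Y_i$ whose expected reward vanishes. Because the Union Mechanism only reshuffles the probability mass assigned by the underlying $Y_i$'s, when both $OM_1$'s acquire nothing on every profile, $\Gamma_y(\vec v,\vec s)$ is identically zero and therefore the Union reward is also zero. In contrast, $OM_k$ has access to joint acquisition rules $x_i(\vec v,\vec s)$ that condition on the entire reported profile and the entire score vector, and the cross-item IC constraint of the linear programming \eqref{OM2} is weaker than imposing single-item IC per coordinate---so $OM_k$ can exploit correlations across items to acquire at a strictly positive expected reward. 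Solving the LP \eqref{OM2} on the candidate instance and verifying a positive objective would then complete the argument.

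For the second claim, I would look for an instance in which each $OM_1$, acting independently on its item, has zero expected reward but is \emph{not} the trivial zero mechanism: it acquires items with nonzero probability, while the positive and negative contributions from the events $v>t$ and $v<t$ cancel exactly. In such an instance $\Gamma_y(\vec v,\vec s)>0$ on some profiles, and the Union rule---which, by construction, first assigns acquisition probability to the items whose reports indicate strictly larger factual quality---systematically replaces acquisitions of low-value items with acquisitions of high-value ones. This converts a net-zero sum of independent $OM_1$ rewards into a strictly positive Union reward, yielding an unbounded ratio. Incentive compatibility of the resulting Union Mechanism is guaranteed by Theorem~\ref{Union-IC-Mono} since each underlying $OM_1$ is incentive compatible.

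The main obstacle is that closed-form constructions of these extremal instances are not transparent, and the validity of the claimed zero rewards depends on delicate numerical cancellations between the $v>t$ and $v<t$ contributions. As in the proof of Theorem~\ref{Single-app}, the witnessing instances will most likely be found by numerical search over distributions $D$ and stochastic matrices $R$ subject to the structural requirements above; once suitable parameters are identified, the verification reduces to solving a small number of linear programs per instance and comparing objective values, which is routine.
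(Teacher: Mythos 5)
Your proposal follows exactly the paper's strategy: for each ratio, exhibit a two-item instance with $|V|=|S|=4$ in which the denominator mechanism earns zero expected reward while the numerator mechanism earns a strictly positive one, and your structural analysis of what those instances must look like (identically-zero $OM_1$'s for the first claim; zero-reward but non-trivial $OM_1$'s whose acquisition mass the Union reallocates toward high-quality items for the second) is precisely what the paper's numerical examples realize. The only piece you leave open is the explicit witnesses, which the paper supplies as two concrete matrices $R$ yielding rewards $0$ versus $0.0085$ and $0$ versus $0.0249$; since you correctly reduce that remaining step to a numerical search followed by routine LP verification, this is the same proof in outline.
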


\begin{proof}
    
First, we construct an instance to show that the approximation ratio of OM$_k$ to UM$_{OM_1}$ is unbounded. To simplify the complex situation, we assume that the owner has only two items. Then we compare the collector's expected reward in these two mechanisms.

We set  $t=0.5$, $V = \{0,0.33,0.66,1\}$, $S = \{0,0.33,0.66,1\}$. $d[0] = 0.2645$, $d[\frac{1}{3}] = 0.5386$, $d[\frac{2}{3}] = 0.1861$, $d[1] = 0.0109$. 
The matrix $R$ is constructed as follows: 
\begin{align*}
R &=\left[\begin{array}{llll}
0.522 & 0.232 & 0.145 & 0.101 \\ 
0.022 & 0.708 & 0.221 & 0.049 \\
0.004 & 0.427 & 0.515 & 0.054 \\
0.066 & 0.113 & 0.270 & 0.551
\end{array}\right].
\end{align*}

Therefore, the collector's expected reward are 0 for the objective of $UM_{OM_1}$ and 0.0085264 for $OM_2$. Thus, we can conclude that the approximation ratio is unbounded. For a detailed account of the results, please refer to the following anonymous URL due to page constraints: https://anonymous.4open.science/r/On-Truthful-Item-Acquiring-Mechanisms-for-Reward-Maximization-FD9C


Next, we construct an instance to show that the approximation ratio of UM$_{OM_1}$ to $k \times$ OM$_1$ is unbounded. To simplify the complex situation, we assume that the owner has only two items. Then we compare the collector's expected reward in these two mechanisms.


We set  $t=0.5$, $V = \{0,0.33,0.66,1\}$, $S = \{0,0.33,0.66,1\}$. $d[0] = 0.2645$, $d[\frac{1}{3}] = 0.5386$, $d[\frac{2}{3}] = 0.1861$, $d[1] = 0.0109$. 
The matrix $r$ is constructed as follows: 

\begin{align*}
R &=\left[\begin{array}{llll}
0.749 & 0.128 & 0.074 & 0.049 \\ 
0.057 & 0.737 & 0.190 & 0.016 \\
0.018 & 0.086 & 0.834 & 0.062 \\
0.144 & 0.147 & 0.209 & 0.500
\end{array}\right].
\end{align*}

Therefore, the collector's expected reward is 0 for the objective of $2 \times OM_1$ and 0.0248746 for $UM_{OM_1}$. Thus, we can conclude that the approximation ratio is unbounded. For a detailed account of the results, please refer to the following anonymous URL, which has been provided due to page limitations: https://anonymous.4open.science/ \\ r/On-Truthful-Item-Acquiring-Mechanisms-for-Reward-\\ Maximization-FD9C
\end{proof}


\section{Experimental Results}
In the preceding sections, we devoted our efforts to designing mechanisms for the item-acquiring problem, with a focus on optimizing the collector's expected reward while ensuring specific desirable properties. In this section, we present the results of our experiments, aimed at achieving two key objectives. Firstly, we assess the robustness of these mechanisms concerning the independent appraiser's assessment accuracy. Secondly, we investigate the impact of these mechanisms on the items' acquiring rate when put into practice.
The item-acquiring problem bears striking similarities to the peer-reviewing process prevalent in the academic world \cite{su2021you, su2022truthful}. To gain insights into the effectiveness of these mechanisms in a real-world academic context, we conduct experiments to evaluate how well they perform when reviewers' assessments are subject to random noises. Additionally, we analyze the implications of deploying different mechanisms on the item acquiring rate, shedding light on their practicality and efficacy in this academic setting.
These experiments allow us to better understand the behavior of the mechanisms under various scenarios, providing valuable insights for potential improvements and real-world applications.

\begin{figure}[htbp] 
    \centering 
    \begin{minipage}[t]{0.47\textwidth} 
        \centering 
        \includegraphics[width=\textwidth]{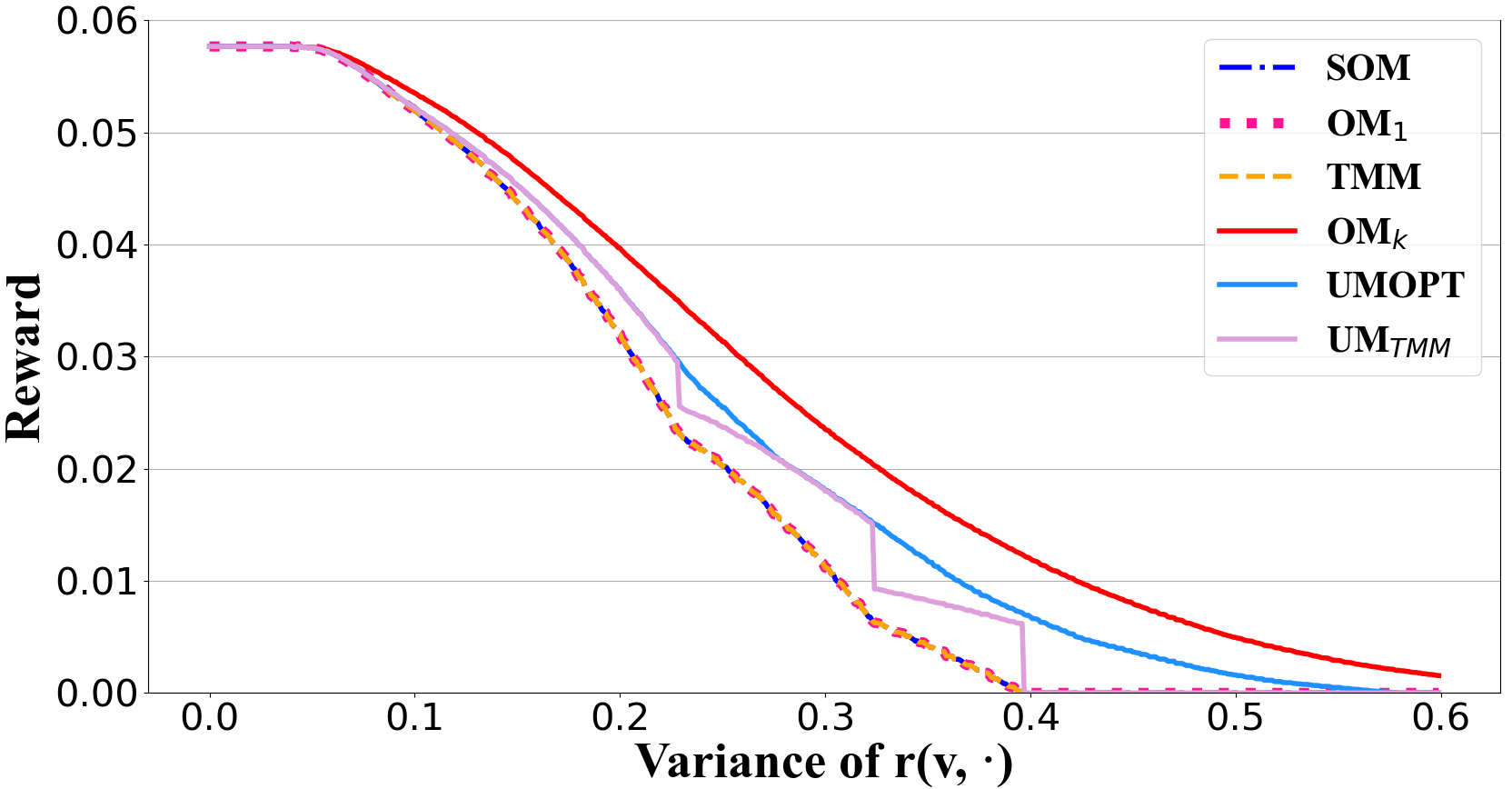} 
        \subcaption{Normal Distribution} 
        \label{F1a} 
    \end{minipage} 
    \hfill   
    \begin{minipage}[t]{0.47\textwidth} 
        \centering 
        \includegraphics[width=\textwidth]{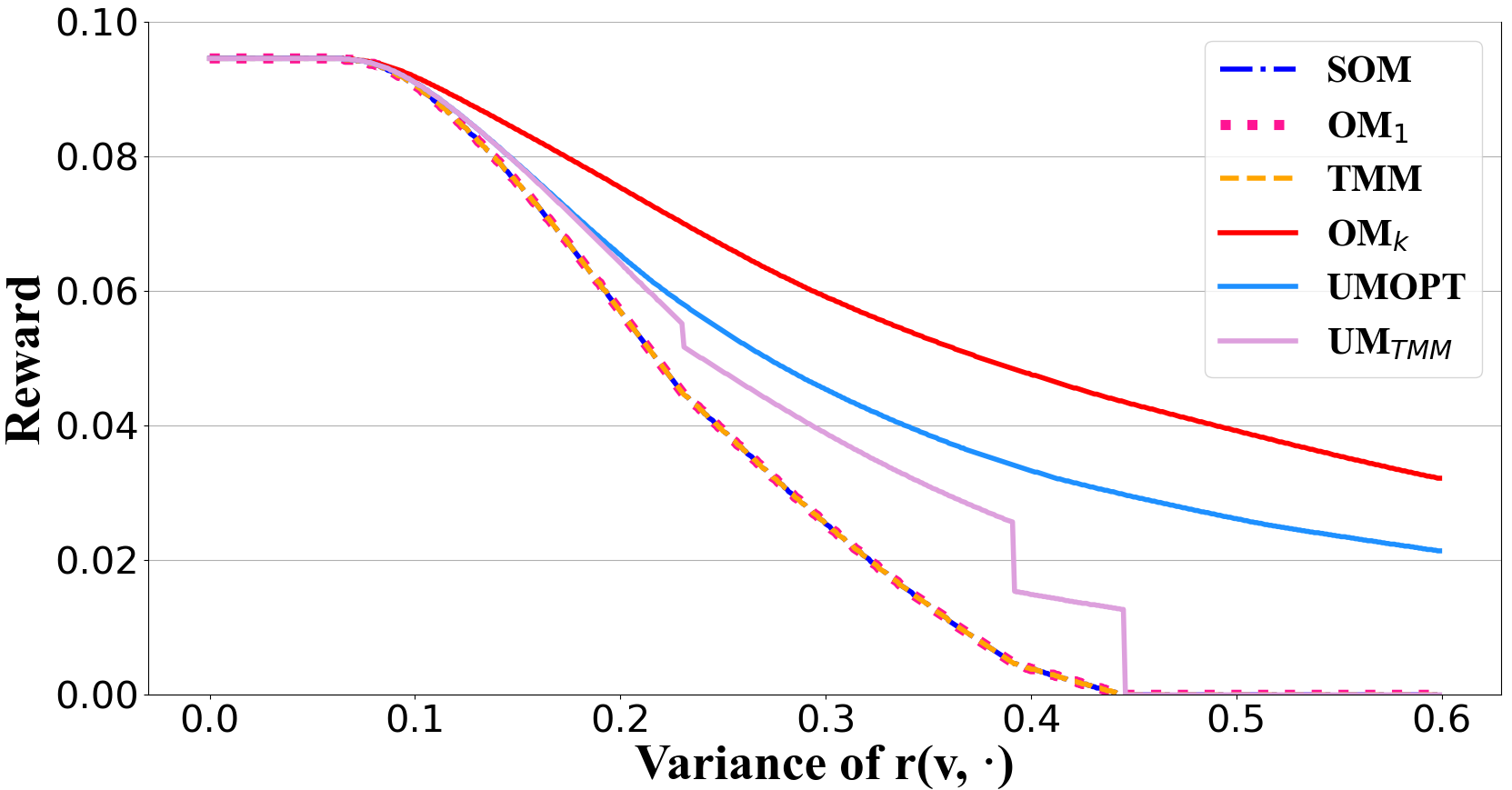} 
        \subcaption{Log-Normal Distribution} 
        \label{F1b} 
    \end{minipage} 
    \caption{Subfigures (a) and (b) illustrate the collector's expected reward when $D$ and $r(v,\cdot)$ follow the normal and log-normal distributions, respectively. For multiple-item mechanisms, the figures show the collector's expected reward divided by the number of items.} 
    \label{F1} 
\end{figure}

\begin{figure}[htbp] 
    \centering 
    \begin{minipage}[t]{0.47\textwidth} 
        \centering 
        \includegraphics[width=\textwidth]{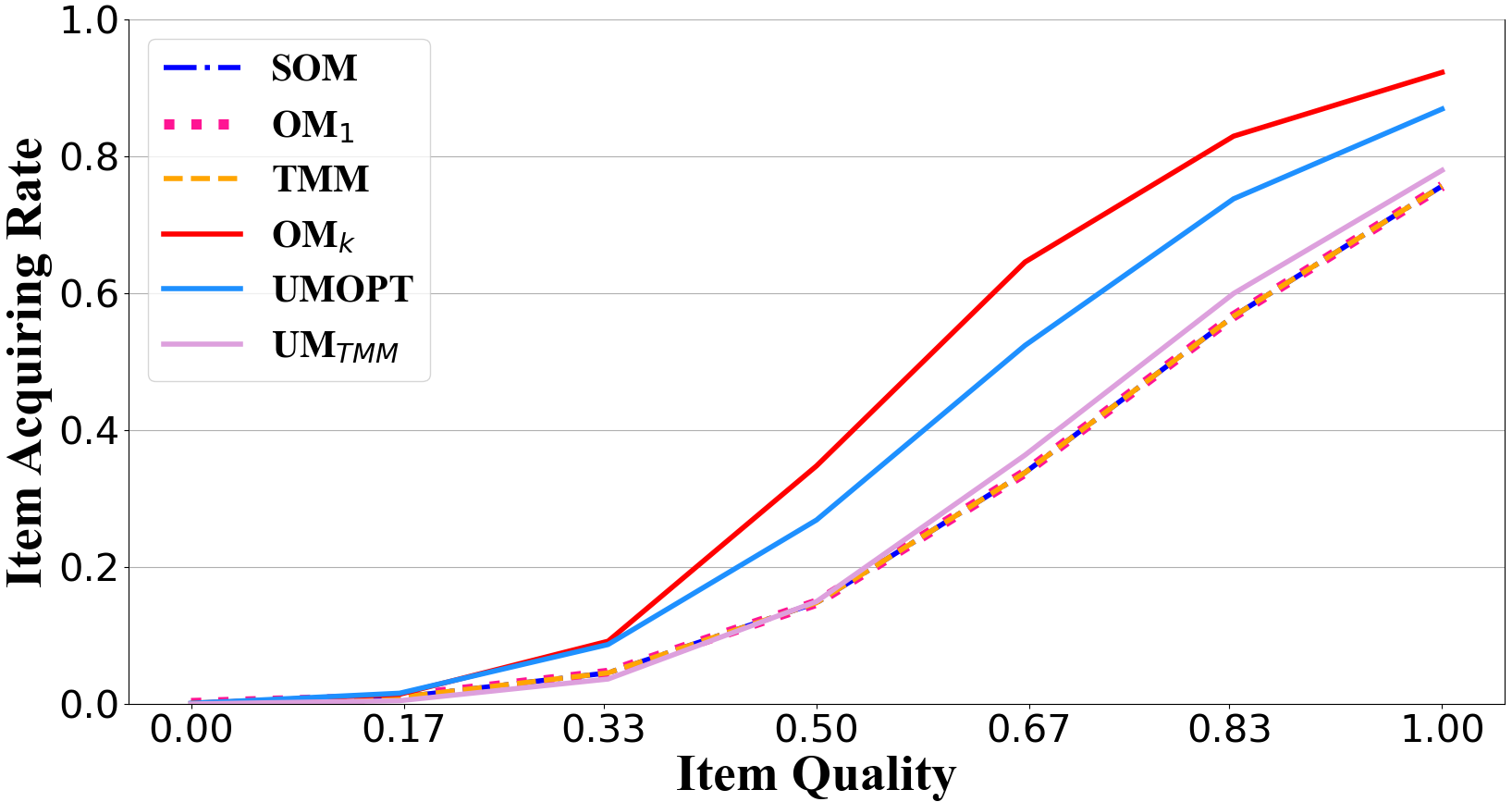} 
        \subcaption{Normal Distribution} 
        \label{F2a} 
    \end{minipage} 
    \hfill   
    \begin{minipage}[t]{0.47\textwidth} 
        \centering 
        \includegraphics[width=\textwidth]{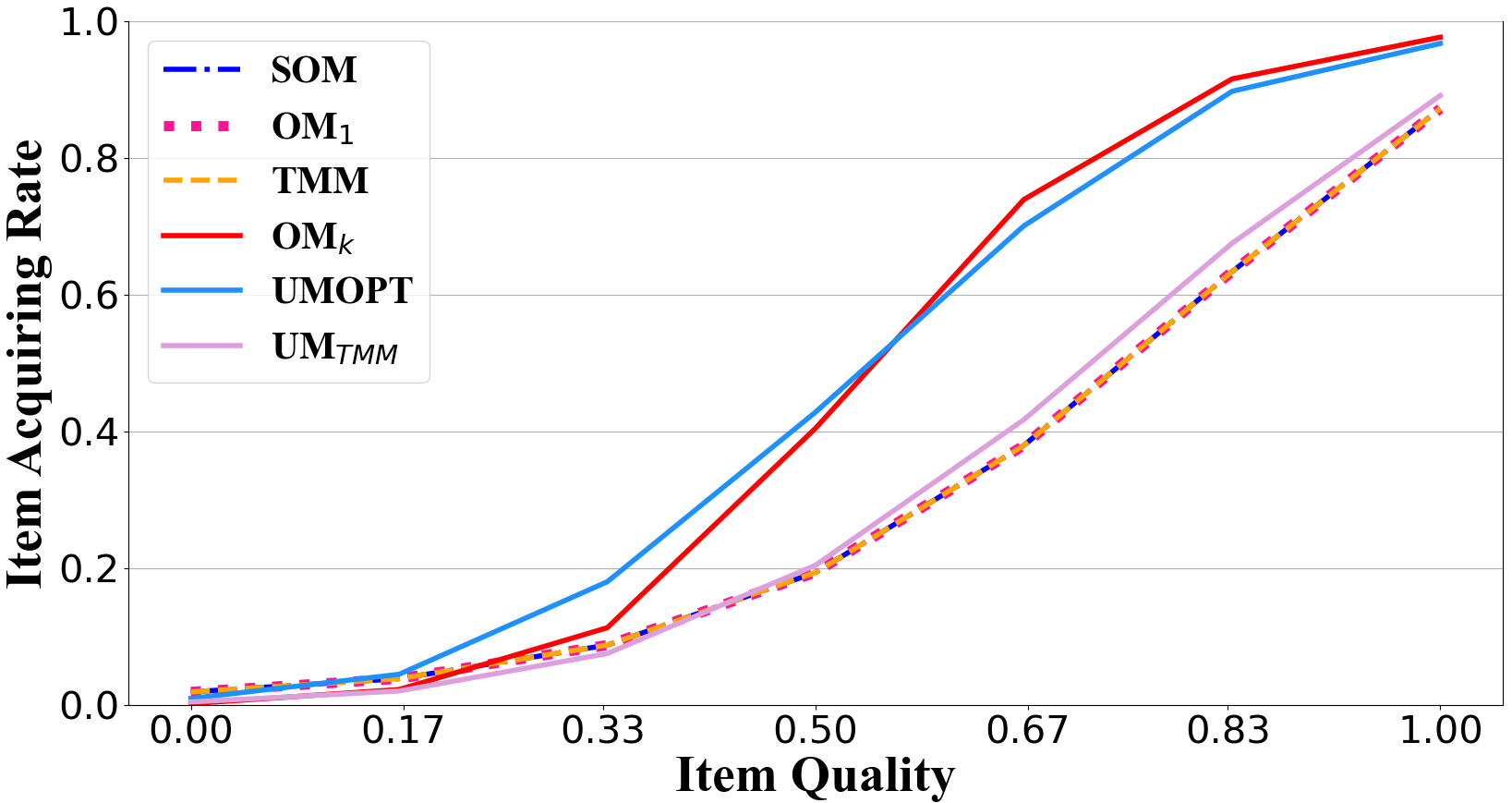} 
        \subcaption{Log-Normal Distribution} 
        \label{F2b} 
    \end{minipage} 
    \caption{Subfigures (a) and (b) illustrate the item acquiring rate when $D$ and $r(v,\cdot)$ follow the normal and log-normal distributions, respectively. }
    \label{F2} 
\end{figure}

\subsection{Experimental Setting}
We set $V=S=\{0,\frac{1}{6},\frac{2}{6},\frac{3}{6},\frac{4}{6},\frac{5}{6},1\}$ and $t=0.25$. We consider two scenarios: one involving the acquisition of a single item and the other involving the acquisition of two items, i.e., $k=2$. For our experiments, we vary two parameters: the distribution $D$ that the factual quality $v$ follows and the distribution $r(v,\cdot)$ when $v$ is fixed. We compare these mechanisms when these probabilities follow normal and log-normal distributions.

{\bf Normal distribution.} We set $D \sim N(0.3,0.25)$. We discretize the normal distribution by dividing the real axis into seven intervals and scale the probabilities of values falling into each interval so that they sum to 1. That is, $d(0) = 0.1377, d(\frac{1}{6}) = 0.245, d(\frac{2}{6}) = 0.2804, d(\frac{3}{6}) = 0.2054, d(\frac{4}{6}) = 0.0968$,  $d(\frac{5}{6}) = 0.0291, d(1) = 0.0057$. For each $v\in V$, we let $r(v,\cdot)$ follow normal distributions with the mean value $v$; the variance varies within the range $[0, 0.6]$ and has a step size of 0.001.

{\bf Log-normal distribution.} 
In practice, the log-normal distribution is commonly used to model systems' reliability \cite{VANBUREN199795}. In particular, it models the failure rates of complex systems that have multiple failure modes, each with its own distribution. This is particularly relevant to our problem since the quality of an item can be influenced by a wide range of factors, such as material quality and manufacturing processes. For a direct comparison with the normal distribution, we choose the same mean and variance values. 

\subsection{Results}
For single-item acquiring, we can compute the acquiring matrix of the SOM when the distributions are fixed. For TMM, we compute the optimal $\alpha,b_1,b_2$ that maximizes the collector's expected reward and compare this optimal TMM with other mechanisms. In the case of Optimal Mechanisms (OM$_k$), there may be multiple optimal solutions and we select the solution provided by Gurobi, a linear programming solver. 
For multiple-item acquiring, we consider two types of Union Mechanisms. One is UM$_{TMM}$, which is the Union Mechanism consisting of $k$ Two Menu Mechanisms. The other is UMOPT, which is the optimal Union Mechanism that consists of $k$ single-item mechanisms.

{\bf Collector's expected reward.}
We first look at how these mechanisms perform when $D$ and $r(v,\cdot)$ follow normal distributions (Figure \ref{F1a}). For the single-item case, we notice that the collector's expected reward drops when the variance of the normal distribution $r(v,\cdot)$ increases, and the difference between three single-item mechanisms, SOM, OM$_1$, and TMM, is very small. In the multiple-item case, we compute the collector's expected reward of the multiple items and show the average reward (per item) in the same figure. 
We observe that the more accurate the independent appraiser is, the higher the expected reward for the collector. 
Furthermore, the collector's expected reward is higher when employing a multi-item mechanism, such as OM$_k$, UMOPT, or UM$_{TMM}$, compared to using a single-item mechanism multiple times.  When $D$ and $r(v,\cdot)$ follow log-normal distributions (Figure \ref{F1b}), we obtain similar results.

{\bf Item acquiring rate.} Figure \ref{F2a} presents our results in terms of the item acquiring rate when $r(v,\cdot)$ follows the normal distribution $N(v,0.25)$ for any $v\in V$. It is important to note that the UM$_{TMM}$ outperforms the single-item mechanism regardless of the quality of the item. Compared to TMM, it reduces the likelihood of getting low-quality items and increase the likelihood of getting high-quality items.
The results for log-normal distributions are presented in Figure \ref{F2b}. The observed trends closely resemble those observed in the case of normal distributions.



\section{Conclusion and Future Work}
In our research, we introduced a compelling item-acquiring problem in which the owner holds confidential information about item quality. Her goal is to maximize the collector's chances of acquiring these items, while the collector, armed with the appraiser's assessment and public data, aims to maximize the expected reward from acquisitions. The inherent information asymmetry between them posed a challenge: designing incentive-compatible mechanisms to encourage the owner's truthful disclosure of item quality.
To address this, we proposed various deterministic and randomized mechanisms for single-item and multi-item scenarios, each with different menu sizes. By rigorously evaluating and comparing these mechanisms, we aimed to identify effective strategies for eliciting truthful information from the owner and optimizing the acquisition process. Our study sheds light on the intricate dynamics between owners and collectors in the digital marketplace, offering insights for improving real-world item-acquiring scenarios.

We addressed this challenge as a mechanism design problem without payments. Its fundamental nature distinguishes itself from classic mechanism design problems. In future work, we can explore it as a mechanism design problem with payments, which could have potential applications in various domains, as indicated below.

\emph{The used car markets.}  
Akerlof's theory of the "Market for Lemons" is a classic example of how information asymmetry can significantly impact market outcomes \cite{10.2307/1879431}. Specifically, in this scenario, the seller of a low-quality car (a "lemon") is aware of its true condition, while the seller of a high-quality car possesses knowledge about its superior quality. Since buyers lack the ability to easily distinguish between the two types of cars, an information gap arises between buyers and sellers. Consequently, buyers are more inclined to purchase low-quality cars because they cannot differentiate them from high-quality ones, leading to adverse selection. This, in turn, can lead to market failure.
Akerlof's theory proposes that a potential solution to this problem is to devise mechanisms that encourage sellers to reveal the true quality of their products. By eliciting genuine information from sellers, we can mitigate the adverse effects of information asymmetry and improve market efficiency. This theory remains to be relevant and influential in the study of market dynamics and strategies to address information imbalances.

\emph{Antique collection.} 
The quest for rare and unique items is a competitive and demanding endeavor. Antique collectors frequently turn to independent third-party experts for guidance when evaluating the value and quality of potential acquisitions. These experts can be appraisers, dealers, auction houses, or specialized consultants with in-depth knowledge in a specific category of antiques. An independent appraisal offers collectors an objective and impartial assessment of an item's authenticity and condition. Nonetheless, it's important to note that the antique item owner usually possesses more detailed knowledge about its provenance and unique characteristics compared to a collector and appraiser. This information asymmetry between the owner and the collector can present unique challenges when evaluating and acquiring these treasured artifacts.

In these applications, there is a monetary exchange between the seller and the buyer, which differs from the relationship between the item owner and the collector. In the case of the seller, the goal may be to maximize revenue, while the buyer may have quasi-linear utilities and budget constraints. 

Another fascinating avenue for exploration is to approach this problem through the lens of the learning-augmented mechanism design framework. This approach leverages predictions to enhance decision-making, as demonstrated in previous works \cite{DBLP:conf/sigecom/AgrawalBGOT22, DBLP:journals/corr/abs-2107-08277}. Incorporating predictive elements could provide valuable insights and potentially optimize the mechanism further, warranting further investigation and potential breakthroughs in addressing this problem from different angles. 

\bibliographystyle{IEEEtran}  
\bibliography{ref}

\begin{thebibliography}{10}
\providecommand{\url}[1]{#1}
\csname url@samestyle\endcsname
\providecommand{\newblock}{\relax}
\providecommand{\bibinfo}[2]{#2}
\providecommand{\BIBentrySTDinterwordspacing}{\spaceskip=0pt\relax}
\providecommand{\BIBentryALTinterwordstretchfactor}{4}
\providecommand{\BIBentryALTinterwordspacing}{\spaceskip=\fontdimen2\font plus
\BIBentryALTinterwordstretchfactor\fontdimen3\font minus
  \fontdimen4\font\relax}
\providecommand{\BIBforeignlanguage}[2]{{%
\expandafter\ifx\csname l@#1\endcsname\relax
\typeout{** WARNING: IEEEtran.bst: No hyphenation pattern has been}%
\typeout{** loaded for the language `#1'. Using the pattern for}%
\typeout{** the default language instead.}%
\else
\language=\csname l@#1\endcsname
\fi
#2}}
\providecommand{\BIBdecl}{\relax}
\BIBdecl

\bibitem{shah2022challenges}
N.~B. Shah, ``Challenges, experiments, and computational solutions in peer
  review,'' \emph{Communications of the ACM}, vol.~65, no.~6, pp. 76--87, 2022.

\bibitem{stelmakh2021novice}
I.~Stelmakh, N.~B. Shah, A.~Singh, and H.~Daum{\'e}~III, ``A novice-reviewer
  experiment to address scarcity of qualified reviewers in large conferences,''
  in \emph{Proceedings of the AAAI Conference on Artificial Intelligence},
  vol.~35, no.~6, 2021, pp. 4785--4793.

\bibitem{fiez2020super}
T.~Fiez, N.~Shah, and L.~Ratliff, ``A super* algorithm to optimize paper
  bidding in peer review,'' in \emph{Conference on Uncertainty in Artificial
  Intelligence}.\hskip 1em plus 0.5em minus 0.4em\relax PMLR, 2020, pp.
  580--589.

\bibitem{zhao2018novel}
S.~Zhao, D.~Zhang, Z.~Duan, J.~Chen, Y.-p. Zhang, and J.~Tang, ``A novel
  classification method for paper-reviewer recommendation,''
  \emph{Scientometrics}, vol. 115, no.~3, pp. 1293--1313, 2018.

\bibitem{ford2013defining}
E.~Ford, ``Defining and characterizing open peer review: A review of the
  literature,'' \emph{Journal of Scholarly Publishing}, vol.~44, no.~4, pp.
  311--326, 2013.

\bibitem{sun2022does}
M.~Sun, J.~Barry~Danfa, and M.~Teplitskiy, ``Does double-blind peer review
  reduce bias? evidence from a top computer science conference,'' \emph{Journal
  of the Association for Information Science and Technology}, vol.~73, no.~6,
  pp. 811--819, 2022.

\bibitem{hart2013menu}
S.~Hart and N.~Nisan, ``The menu-size complexity of auctions,'' in
  \emph{Proceedings of the fourteenth ACM conference on Electronic commerce},
  2013, pp. 565--566.

\bibitem{wang2014optimal}
Z.~Wang and P.~Tang, ``Optimal mechanisms with simple menus,'' in
  \emph{Proceedings of the fifteenth ACM conference on Economics and
  computation}, 2014, pp. 227--240.

\bibitem{DBLP:journals/geb/BabaioffGN22}
M.~Babaioff, Y.~A. Gonczarowski, and N.~Nisan, ``The menu-size complexity of
  revenue approximation,'' \emph{Games Econ. Behav.}, vol. 134, pp. 281--307,
  2022.

\bibitem{goldsmith2007ai}
J.~Goldsmith and R.~H. Sloan, ``The {A}{I} conference paper assignment
  problem,'' in \emph{Proceedings AAAI Workshop on Preference Handling for
  Artificial Intelligence, Vancouver}.\hskip 1em plus 0.5em minus 0.4em\relax
  Vancouver, Canada: AAAI Workshop, 2007, pp. 53--57.

\bibitem{DBLP:conf/aaai/DickersonSSX19}
J.~P. Dickerson, K.~A. Sankararaman, A.~Srinivasan, and P.~Xu, ``Balancing
  {R}elevance and {D}iversity in {O}nline {B}ipartite {M}atching via
  {S}ubmodularity,'' in \emph{Proceedings of the Thirty-Third AAAI Conference
  on Artificial Intelligence}, 2019, pp. 1877--1884.

\bibitem{meir2021market}
R.~Meir, J.~Lang, J.~Lesca, N.~Mattei, and N.~Kaminsky, ``A market-inspired
  bidding scheme for peer review paper assignment,'' in \emph{35th AAAI
  Conference on Artificial Intelligence}.\hskip 1em plus 0.5em minus
  0.4em\relax Palo Alto, United State: AAAI Press, 2021, pp. 4776--4784.

\bibitem{su2021you}
W.~J. Su, ``You are the best reviewer of your own papers: An owner-assisted
  scoring mechanism,'' in \emph{NeurIPS}, 2021, pp. 27\,929--27\,939.

\bibitem{su2022truthful}
------, ``A truthful owner-assisted scoring mechanism,'' \emph{CoRR}, vol.
  abs/2206.08149, 2022.

\bibitem{wu2023isotonic}
J.~Wu, H.~Xu, Y.~Guo, and W.~Su, ``An isotonic mechanism for overlapping
  ownership,'' \emph{arXiv preprint arXiv:2306.11154}, 2023.

\bibitem{miller2005eliciting}
N.~Miller, P.~Resnick, and R.~Zeckhauser, ``Eliciting informative feedback: The
  peer-prediction method,'' \emph{Management Science}, vol.~51, no.~9, pp.
  1359--1373, 2005.

\bibitem{faltings2017game}
B.~Faltings and G.~Radanovic, ``Game theory for data science: Eliciting
  truthful information,'' \emph{Synthesis Lectures on Artificial Intelligence
  and Machine Learning}, vol.~11, no.~2, pp. 1--151, 2017.

\bibitem{dasgupta2013crowdsourced}
A.~Dasgupta and A.~Ghosh, ``Crowdsourced judgement elicitation with endogenous
  proficiency,'' in \emph{Proceedings of the 22nd international conference on
  World Wide Web}, 2013, pp. 319--330.

\bibitem{kong2019information}
Y.~Kong and G.~Schoenebeck, ``An information theoretic framework for designing
  information elicitation mechanisms that reward truth-telling,'' \emph{ACM
  Transactions on Economics and Computation (TEAC)}, vol.~7, no.~1, pp. 1--33,
  2019.

\bibitem{frongillo2017geometric}
R.~Frongillo and J.~Witkowski, ``A geometric perspective on minimal peer
  prediction,'' \emph{ACM Transactions on Economics and Computation (TEAC)},
  vol.~5, no.~3, pp. 1--27, 2017.

\bibitem{shnayder2016informed}
V.~Shnayder, A.~Agarwal, R.~Frongillo, and D.~C. Parkes, ``Informed
  truthfulness in multi-task peer prediction,'' in \emph{Proceedings of the
  2016 ACM Conference on Economics and Computation}, 2016, pp. 179--196.

\bibitem{kong2020dominantly}
Y.~Kong, ``Dominantly truthful multi-task peer prediction with a constant
  number of tasks,'' in \emph{Proceedings of the Fourteenth Annual ACM-SIAM
  Symposium on Discrete Algorithms}.\hskip 1em plus 0.5em minus 0.4em\relax
  SIAM, 2020, pp. 2398--2411.

\bibitem{kong2018water}
Y.~Kong and G.~Schoenebeck, ``Water from two rocks: Maximizing the mutual
  information,'' in \emph{Proceedings of the 2018 ACM Conference on Economics
  and Computation}, 2018, pp. 177--194.

\bibitem{liu2023surrogate}
Y.~Liu, J.~Wang, and Y.~Chen, ``Surrogate scoring rules,'' \emph{ACM
  Transactions on Economics and Computation}, vol.~10, no.~3, pp. 1--36, 2023.

\bibitem{brier1950verification}
G.~W. Brier \emph{et~al.}, ``Verification of forecasts expressed in terms of
  probability,'' \emph{Monthly weather review}, vol.~78, no.~1, pp. 1--3, 1950.

\bibitem{good1992rational}
I.~J. Good, ``Rational decisions,'' in \emph{Breakthroughs in
  statistics}.\hskip 1em plus 0.5em minus 0.4em\relax Springer, 1992, pp.
  365--377.

\bibitem{good1971comment}
I.~Good, ``Comment on “measuring information and uncertainty” by robert j.
  buehler,'' \emph{Foundations of Statistical Inference}, pp. 337--339, 1971.

\bibitem{prelec2004bayesian}
D.~Prelec, ``A bayesian truth serum for subjective data,'' \emph{science}, vol.
  306, no. 5695, pp. 462--466, 2004.

\bibitem{gneiting2007strictly}
T.~Gneiting and A.~E. Raftery, ``Strictly proper scoring rules, prediction, and
  estimation,'' \emph{Journal of the American statistical Association}, vol.
  102, no. 477, pp. 359--378, 2007.

\bibitem{palmer2002economic}
T.~N. Palmer, ``The economic value of ensemble forecasts as a tool for risk
  assessment: From days to decades,'' \emph{Quarterly Journal of the Royal
  Meteorological Society: A journal of the atmospheric sciences, applied
  meteorology and physical oceanography}, vol. 128, no. 581, pp. 747--774,
  2002.

\bibitem{brocker2012evaluating}
J.~Br{\"o}cker, ``Evaluating raw ensembles with the continuous ranked
  probability score,'' \emph{Quarterly Journal of the Royal Meteorological
  Society}, vol. 138, no. 667, pp. 1611--1617, 2012.

\bibitem{DBLP:conf/sigecom/GaoZC13}
X.~A. Gao, J.~Zhang, and Y.~Chen, ``What you jointly know determines how you
  act: strategic interactions in prediction markets,'' in \emph{{Proceedings of
  the fourteenth ACM Conference on Electronic Commerce}}.\hskip 1em plus 0.5em
  minus 0.4em\relax {ACM-EC}, 2013, pp. 489--506.

\bibitem{garratt2003forecast}
A.~Garratt, K.~Lee, M.~H. Pesaran, and Y.~Shin, ``Forecast uncertainties in
  macroeconomic modeling: An application to the uk economy,'' \emph{Journal of
  the American Statistical Association}, vol.~98, no. 464, pp. 829--838, 2003.

\bibitem{nolde2017elicitability}
N.~Nolde and J.~F. Ziegel, ``Elicitability and backtesting: Perspectives for
  banking regulation,'' \emph{The annals of applied statistics}, vol.~11,
  no.~4, pp. 1833--1874, 2017.

\bibitem{DBLP:conf/soda/JinLTX19}
Y.~Jin, P.~Lu, Z.~G. Tang, and T.~Xiao, ``Tight revenue gaps among simple
  mechanisms,'' in \emph{{SODA}}.\hskip 1em plus 0.5em minus 0.4em\relax
  {SIAM}, 2019, pp. 209--228.

\bibitem{DBLP:journals/geb/AlaeiHNPY19}
S.~Alaei, J.~D. Hartline, R.~Niazadeh, E.~Pountourakis, and Y.~Yuan, ``Optimal
  auctions vs. anonymous pricing,'' \emph{Games Econ. Behav.}, vol. 118, pp.
  494--510, 2019.

\bibitem{DBLP:conf/soda/GuruswamiHKKKM05}
V.~Guruswami, J.~D. Hartline, A.~R. Karlin, D.~Kempe, C.~Kenyon, and
  F.~McSherry, ``On profit-maximizing envy-free pricing,'' in
  \emph{{SODA}}.\hskip 1em plus 0.5em minus 0.4em\relax {SIAM}, 2005, pp.
  1164--1173.

\bibitem{VANBUREN199795}
J.~M.A.Van~Buren, W.E.Watt, ``Application of the log-normal and normal
  distributions to stormwater quality parameters,'' \emph{Water Research},
  vol.~31, no.~1, pp. 95--104, 1997.

\bibitem{10.2307/1879431}
G.~A. Akerlof, ``The market for "lemons": Quality uncertainty and the market
  mechanism,'' \emph{The Quarterly Journal of Economics}, vol.~84, no.~3, pp.
  488--500, 1970.

\bibitem{DBLP:conf/sigecom/AgrawalBGOT22}
P.~Agrawal, E.~Balkanski, V.~Gkatzelis, T.~Ou, and X.~Tan, ``Learning-augmented
  mechanism design: Leveraging predictions for facility location,'' in
  \emph{{EC}}.\hskip 1em plus 0.5em minus 0.4em\relax {ACM}, 2022, pp.
  497--528.

\bibitem{DBLP:journals/corr/abs-2107-08277}
D.~Fotakis, E.~Gergatsouli, T.~Gouleakis, and N.~Patris, ``Learning augmented
  online facility location,'' \emph{CoRR}, vol. abs/2107.08277, 2021.

\end{thebibliography}

\end{document}